\newtheorem{theorem}{Theorem}[section]
\newtheorem{corollary}[theorem]{Corollary}
\newtheorem{lemma}[theorem]{Lemma}
\newcommand{\ignore}[1]{}
\newcommand{\eat}[1]{}
\title{Complexity of Scarf's Lemma and Related Problems}
\author{{Shiva Kintali}\footnote{College of Computing,
Georgia Institute of Technology, Atlanta, GA-30332. Email :
{\em{kintali@cc.gatech.edu}}}
}
\date{}
\begin{document}
\maketitle

\begin{abstract}

Scarf's lemma is one of the fundamental results in combinatorics, originally introduced to study the core of an $N$-person game. Over the last four decades, the usefulness of Scarf's lemma has been demonstrated in several important combinatorial problems seeking {\em{stable}} solutions (\cite{fbgp}, \cite{onlemma} \cite{frackernel}, \cite{h-perfect}). However, the complexity of the computational version of Scarf's lemma ({\sc{Scarf}}) remained open. In this paper, we prove that {\sc{Scarf}} is complete for the complexity class {\bf{PPAD}}. This proves that {\sc{Scarf}} is as hard as the computational versions of Brouwer's fixed point theorem and Sperner's lemma. Hence, there is no polynomial-time algorithm for {\sc{Scarf}} unless {\bf{PPAD}} $\subseteq$ {\bf{P}}. We also show that fractional stable paths problem and finding strong fractional kernels in digraphs are {\bf{PPAD}}-hard. \\

{\bf{Keywords}}: core in a balanced game, fractional stable paths problem, fractional kernels in digraphs, hypergraphic preference systems, Nash-equilibrium, PPAD-completeness, Scarf's lemma.

\end{abstract}

\newpage

\section{Introduction}

The study of combinatorial problems seeking {\em{stable}} solutions has a long history, dating back to stable marriage problem. In 1962, Gale and Shapley presented a polynomial-time algorithm for finding a stable matching in a bipartite graph \cite{gale-shapley}. Since then, several generalizations and extensions of stable marriage problem have been studied. In this paper, we study the complexity of several combinatorially rich problems having similar flavor. These problems include {\em{stable paths problem}} \cite{fbgp}, finding stable matchings in {\em{hypergraphic preference systems}} \cite{onlemma} and computing {\em{kernels in directed graphs}} (\cite{frackernel}, \cite{h-perfect}). Although they are defined and studied in different contexts, they have one strikingly common feature. While there are instances of these problems with no stable solution, every instance is {\em{guaranteed}} to have a {\em{fractional}} stable solution. The existence of these fractional stable solutions is proved using Scarf's lemma, originally introduced to study the core of an $N$-person game \cite{scarf}.

\subsection{Scarf's Lemma}

In one of the fundamental papers in game theory \cite{scarf}, Scarf studied the core of an $N$-person game and proved that {\em{every balanced $N$-person game with nontransferable utilities has a nonempty core}}. His proof is based on an elegant combinatorial argument, and makes use of no fixed point theorems. The core (no pun intended) of his argument, known as Scarf's lemma, found many independent applications in a diverse set of combinatorial problems (\cite{fbgp}, \cite{onlemma} \cite{frackernel}, \cite{h-perfect}). Below we state Scarf's lemma \cite{scarf} and define the computational version of Scarf's lemma ({\sc{Scarf}}). Let $I = [\delta_{ij}]$ be an $m \times m$ identity matrix. Let $[n] = \{1,2,\dots,n\}$.

\begin{theorem}\label{thm:lem}\mbox{(Scarf's lemma)}
Let $m < n$ and let $B$ be an $m \times n$ real matrix such that $b_{ij} = \delta_{ij}$ for $1 \leqslant i, j \leqslant m$. Let $b$ be a non-negative vector in ${\mathbb{R}}^m$, such that the set $\{\alpha\in{\mathbb{R}}_{+}^n : B{\alpha}=b\}$ is bounded. Let $C$ be an $m \times n$ matrix such that $c_{ii} \leqslant c_{ik} \leqslant c_{ij}$ whenever $i,j \leqslant m$, $i \neq j$ and $k > m$. Then there exists a subset $J$ of size $m$ of $[n]$ such that
\begin{itemize}
\item{$B{\alpha}=b$ for some $\alpha\in{\mathbb{R}}_{+}^n$ such that $\alpha_j=0$ whenever $j\notin{J}$, and}
\item{For every $k \in [n]$ there exists $i \in [m]$ such that $c_{ik} \leqslant c_{ij}$ for all $j \in J$.}
\end{itemize}
\end{theorem}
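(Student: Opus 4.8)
The plan is to give a constructive, combinatorial proof by following Scarf's original pivoting argument, which interleaves two subroutines: one that maintains feasibility of the linear system $B\alpha = b$, and one that maintains the ``dominance'' (ordinal) structure recorded by $C$. First I would set up the two complementary notions of a basis. Call a set $J \subseteq [n]$ of size $m$ \emph{feasible} if $B\alpha = b$ has a nonnegative solution supported on $J$; since $b \geqslant 0$ and the first $m$ columns of $B$ form the identity, the set $\{1,\dots,m\}$ is feasible, giving a starting point, and the boundedness hypothesis on $\{\alpha \in \mathbb{R}^n_+ : B\alpha = b\}$ guarantees the associated pivoting is well-defined (no ray to infinity). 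Call a set $K \subseteq [n]$ of size $m$ \emph{dominating} (or ordinally a basis for $C$) if for every $k \in [n]$ there is a row $i$ with $c_{ik} \leqslant c_{ij}$ for all $j \in K$; here one uses a lexicographic/generic perturbation of the $c_{ij}$ so that, within each row, all the relevant entries are distinct and ``row minima'' are unambiguous. The condition $c_{ii} \leqslant c_{ik} \leqslant c_{ij}$ for $i \neq j \leqslant m$, $k > m$ is exactly what makes the set $\{1,\dots,m\}$ \emph{almost} dominating, and more importantly it prevents the pivoting from cycling back to a degenerate configuration at column $i$.

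The heart of the argument is Scarf's pivoting lemma, which I would state and use as follows: in a dominating set $K$, for every index $\ell \in K$ there is exactly one other index $\ell' \notin K$ (with one exception) such that $(K \setminus \{\ell\}) \cup \{\ell'\}$ is again dominating — that is, dominating sets have a ``path-like'' structure under single-element exchanges, just as feasible bases do under simplex pivots. The proof then runs a path-following (Lemke–Howson-style) walk through pairs $(J, K)$: start from $J = K = \{1,\dots,m\}$; this is feasible, and it fails to be dominating only possibly because of a single ``missing'' column. We alternately do a feasibility pivot (to restore $B\alpha = b$ supported on the current set) and a dominance pivot (to restore the covering property), each time removing the column that was just added by the other step. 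One then argues that this process is a walk on a graph all of whose vertices have degree $\leqslant 2$, so it is a union of paths and cycles; the starting configuration is an endpoint of a path, the walk cannot return to it, cannot run forever (finitely many $m$-subsets), and cannot leave the bounded feasible region — so it must terminate at another endpoint, which by construction is a set $J$ that is simultaneously feasible and dominating. That $J$ is the desired set.

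I expect the main obstacle to be the careful handling of \textbf{degeneracy and genericity}. The clean ``exactly two neighbors'' statement for both feasible bases and dominating sets only holds after a symbolic/lexicographic perturbation (of $b$ for the feasibility side, and of the $c_{ij}$ within rows for the dominance side), and one must check (i) that the hypotheses $b_{ij} = \delta_{ij}$ and $c_{ii} \leqslant c_{ik} \leqslant c_{ij}$ survive the perturbation in the form needed to seed the walk, and (ii) that a solution for the perturbed problem yields one for the original by a limiting argument (here the boundedness of the feasible set is essential, so that the limit stays finite). A secondary subtlety is verifying that the feasibility-pivot graph and the dominance-pivot graph share the \emph{same} ``almost-vertex'' at the start — i.e., that the single offending column is consistent between the two structures — so that the two path systems actually splice into one walk rather than two disjoint ones. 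Once the genericity bookkeeping is pinned down, the combinatorial core (degree-$\leqslant 2$ graph, hence a path from the known start to a new solution) is short.
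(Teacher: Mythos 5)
Your proof follows the same Scarf-style pivoting argument the paper uses (in its proof of Theorem \ref{thm:scarf}) to establish existence: perturb $b$ and $C$ for non-degeneracy and ordinal genericity, set up feasible bases and subordinating (dominating) sets together with their single-exchange pivot lemmas, and run a path-following walk from $[m]$ that must terminate at a set that is simultaneously a feasible basis and subordinating. The paper formalizes this as a degree-at-most-$2$ bipartite graph on feasible bases containing $1$ and subordinating sets of size $m$ not containing $1$, but the underlying argument is the same as yours.
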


\begin{framed}
\noindent {\sc{Scarf}} : Given matrices $B$, $C$ and a vector $b$ satisfying the conditions in the above theorem, find $\alpha\in{\mathbb{R}}_{+}^n$ satisfying the conditions above.
\end{framed}

\subsection{PPAD}

A search problem $\mathcal{S}$ is a set of inputs $I_{\mathcal{S}} \subseteq \Sigma^*$ such that for each $x \in I_{\mathcal{S}}$ there is an associated set of solutions ${\mathcal{S}}_x \subseteq \Sigma^{{|x|}^k}$ for some integer $k$, such that for each $x \in I_{\mathcal{S}}$ and $y \in \Sigma^{{|x|}^k}$ whether $y \in {\mathcal{S}}_x$ is decidable in polynomial time. A search problem is {\em{total}} if ${\mathcal{S}}_x \neq \emptyset$ for all $x \in I_{\mathcal{S}}$. {\bf{TFNP}} is the set of all total search problems \cite{class-tfnp}. Since TFNP is a semantic class, several syntactic classes (e,g., PLS \cite{class-pls}, PPA, PPAD, PPP, PPM \cite{class-ppad}) were defined to study the computational phenomenon of TFNP. Since every member of TFNP is equipped with a mathematical proof that it belongs to TFNP, these syntactic classes are defined based on their {\em{proof styles}}. The complexity class {\bf{PPAD}}, introduced by Papadimitriou \cite{class-ppad}, is the class of all search problems whose totality is proved using a {\em{parity argument}}. These search problems are reducible to the following problem :

\begin{framed}
\noindent {\sc{End Of The Line}} : Given two boolean circuits $S$ and $P$ with $n$ input bits and $n$ output bits, such that $P(0^n) = 0^n \neq S(0^n)$, find an input $x \in \{0,1\}^n$ such that $P(S(x)) \neq x$ or $S(P(x)) \neq x \neq 0^n$.
\end{framed}

A polynomially computable function $f$ is a polynomial-time reduction from total search problem
$\mathcal{S}$ to total search problem $\mathcal{T}$ if for every input $x$ of $\mathcal{S}$, $f(x)$ is an input of $\mathcal{T}$, and furthermore there is another polynomial function $g$ such that for every $y \in \mathcal{T}_{f(x)}, g(y) \in \mathcal{S}_x$. A search problem $\mathcal{S}$ in PPAD is called {\bf{PPAD}}-{\em{complete}} if all problems in PPAD reduce to it in polynomial-time.

\subsection{Related Work and Our Contributions}

Aharoni and Holzman \cite{frackernel} proved that every clique-acyclic digraph has a strong fractional kernel. Aharoni and Fleiner \cite{onlemma} proved that every hypergraphic preference system has a fractional stable matching. Both these proofs are based on Scarf's lemma. Haxell and Wilfong \cite{fbgp} proved that every instance of fractional stable paths problem (FSPP) has a fractional stable solution. Their proof works in two stages. In the first stage, they use Scarf's lemma to show that every instance of FSPP has an $\epsilon$-solution, for any positive constant $\epsilon$. Then they apply a standard compactness-type argument to conclude that every instance has an exact solution.

The complexity class TFNP was introduced by Megiddo and Papadimitriou \cite{class-tfnp}. The class PLS (for polynomial local search) was introduced by Johnson, Papadimitriou and Yannakakis \cite{class-pls}. In \cite{class-ppad} Papadimitriou defined the complexity classes PPA (polynomial parity argument), PPAD (polynomial parity argument in directed graphs), PPP (polynomial pigeon-hole principle) and PPM (polynomial probabilistic method). He proved that computational versions of Brouwer's fixed point theorem and Sperner's lemma are PPAD-complete.

Daskalakis, Goldberg and Papadimitriou \cite{nash-ppad} proved that {\sc{3-Dimensional Brouwer}} is PPAD-complete. They reduced {\sc{3-dimensional Brouwer}} to {\sc{3-Graphical Nash}} to prove that {\sc{3-Graphical Nash}} is PPAD-complete. These results together with the reductions of Goldberg and Papadimitriou \cite{normal-graphic} imply that computing Nash equilibrium in games with 4 players ({\sc{4-Nash}}) is PPAD-complete. Chen and Deng \cite{3nash-chen}, Daskalakis and Papadimitriou \cite{3nash-papa} independently proved that {\sc{3-Nash}} is PPAD-complete. Chen and Deng \cite{2nash-chen} proved that {\sc{2-Nash}} is also PPAD-complete. Chen, Deng and Teng \cite{apxnash-chen} showed that {\em{approximating}} Nash equilibrium is also hard. For a list results on the complexity of finding equilibria we refer the reader to a recent book on algorithmic game theory \cite{agt-book}.

In this paper, we study the complexity of the computational version of Scarf's lemma ({\sc{Scarf}}). We prove that {\sc{Scarf}} is complete for the complexity class PPAD. This proves that {\sc{Scarf}} is as hard as the computational versions of Brouwer's fixed point theorem and Sperner's lemma. Hence, there is no polynomial-time algorithm for {\sc{Scarf}} unless PPAD $\subseteq$ P. We also show that fractional stable paths problem and finding strong fractional kernels in digraphs are PPAD-hard. In Section \ref{sec:related}, we mention several related problems belonging to the complexity class PPAD.

\section{Kernels in Digraphs}\label{sec:kernel}

The problem of finding kernels in digraphs plays a crucial role in all our proofs in this paper. In this section, we define computational problems related to finding kernels in digraphs. Let $D(V, A)$ be a directed graph. Let $I(v)$ denote the in-neighborhood of a vertex $v$ i.e., $I(v)$ is $v$ together with the vertices $u$ such that $(u, v) \in A$. A set $K$ of vertices is a clique if every two vertices in K are connected by at least one arc. A set of vertices is called {\em{independent}} if no two distinct vertices in it are connected by an arc. A subset of $V$ is called {\em{dominating}} if it meets $I(v)$ for every $v \in V$. A {\em{kernel}} in $D$ is an independent and dominating set of vertices. A directed triangle shows that not all digraphs have kernels.

A non-negative function $f$ on $V$ is called {\em{fractionally dominating}} if $\sum_{u \in I(v)}{f(u)} \geqslant 1$ for every vertex $v$. The function is {\em{strongly dominating}} if for all $v$, $\sum_{u \in K}{f(u)} \geqslant 1$ for some clique $K$ contained in $I(v)$. A non-negative function $f$ on $V$ is called {\em{fractionally independent}} if $\sum_{u \in K}{f(u)} \leqslant 1$ for every clique $K$. A {\em{fractional kernel}} is a function on $V$ which is {\em{both}} fractionally independent and fractionally dominating. In case that it is also strongly dominating, it is called a {\em{strong}} fractional kernel. A directed triangle shows that not all digraphs have fractional kernels.

An arc $(u, v)$ is called {\em{irreversible}} if $(v, u)$ is not an arc of the graph. A cycle in $D$ is called {\em{proper}} if all of its arcs are irreversible.  A digraph in which no clique contains a proper cycle is called {\em{clique-acyclic}}. The following theorem was proved by Aharoni and Holzman \cite{frackernel} using Scarf's lemma.

\begin{theorem}\label{clique-acyclic} (Aharoni and Holzman \cite{frackernel})
Every clique-acyclic digraph has a strong fractional kernel.
\end{theorem}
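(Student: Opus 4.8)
The plan is to deduce the theorem from Scarf's lemma (Theorem \ref{thm:lem}), by manufacturing from the clique-acyclic digraph $D = (V,A)$ a Scarf instance in which the columns outside the identity block are indexed by the vertices of $D$ and the rows by the cliques of $D$, so that a basic feasible solution of the $B$-system is a fractionally independent function and the ordinal (dominance) conclusion of Scarf's lemma becomes strong domination.

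First I would extract from clique-acyclicity the ordinal data the reduction needs. Fix a clique $K$ of $D$. Since $D$ is clique-acyclic, the irreversible arcs of $D$ with both endpoints in $K$ contain no directed cycle, so they admit a topological order; writing it as a strict linear order $\succ_K$ on $K$, we obtain $u \succ_K v \Rightarrow (u,v) \in A$, since an irreversible arc between $u$ and $v$ must run from the $\succ_K$-larger to the $\succ_K$-smaller vertex, while a reversible pair carries arcs in both directions. Consequently $u \succeq_K v$ implies $u \in I(v)$, so any $\succeq_K$-upward-closed subset of $K$ is contained in $I(v)$ for its $\succ_K$-least element $v$. Producing these orders is the conceptual heart of the proof; the remainder is bookkeeping around Scarf's lemma.

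Next I would assemble the Scarf instance. Index the rows by the nonempty cliques of $D$, say $m$ of them. The columns form two blocks: an identity block of $m$ ``dummy'' columns, one per row; and, for each $v \in V$, a column $e_v$ with a $1$ in the row of $K$ exactly when $v \in K$. Put $b = \mathbf{1}$. Every variable then appears in an equation whose other terms are nonnegative and whose right-hand side is $1$, so $\{\alpha \in \mathbb{R}_+^n : B\alpha = b\}$ lies in $[0,1]^n$ and is bounded, as required. For $C$, in the row of a clique $K$ I assign the smallest value to the dummy column of $K$; values increasing along $\succ_K$ to the columns $e_v$ with $v \in K$ (so a smaller cost there means a $\succ_K$-smaller vertex); a still larger value to every $e_v$ with $v \notin K$; and the largest values to the remaining dummy columns. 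A direct check confirms that this obeys the inequalities $c_{ii} \leqslant c_{ik} \leqslant c_{ij}$ demanded for $i,j \leqslant m$ with $i \neq j$ and $k > m$.

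Finally I would apply Scarf's lemma to obtain $J$ and $\alpha$ and set $f(v) = \alpha_{e_v}$. Then $f \geqslant 0$, and for every clique $K$ the row equation $\alpha_{\mathrm{dummy}_K} + \sum_{v \in K} f(v) = 1$ yields $\sum_{v \in K} f(v) \leqslant 1$, so $f$ is fractionally independent. For strong domination, fix $v$ and let $K$ index a row in which the column $e_v$ is dominated, i.e. $c_{K,e_v} \leqslant c_{K,j}$ for all $j \in J$. The cost layout forces $v \in K$: if $v \notin K$, the dominance inequality against the dummy column of $K$ and against each column $e_u$ with $u \in K$ would exclude all of these from $J$, contradicting the row equation for $K$. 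The same inequality against the dummy column of $K$ forces $\mathrm{dummy}_K \notin J$, hence $\sum_{u \in K} f(u) = 1$; and against the columns $e_u \in J$ with $u \in K$ it forces $u \succeq_K v$ for all such $u$, in particular for every $u \in K$ with $f(u) > 0$ (since $f(u) > 0$ implies $e_u \in J$). Therefore $K' = \{u \in K : u \succeq_K v\}$ is a clique with $\sum_{u \in K'} f(u) = 1$, and $K' \subseteq I(v)$ by the first step, which is exactly strong domination at $v$. I expect the main obstacle to be this final translation — engineering the cost entries so that Scarf's abstract dominance conclusion unwinds into ``there is a tight clique inside $I(v)$'', and verifying that the dominating row is genuinely a clique through $v$ and not a spurious one.
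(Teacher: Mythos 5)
Your proposal is correct, and it is the same Scarf-based construction that the paper attributes to Aharoni and Holzman (and sketches in the proof of Lemma~\ref{lem:kernelppad}): rows indexed by cliques, $b = \mathbf{1}$, columns split into dummy variables and one column per vertex, and costs in a clique's row arranged according to a topological order of that clique's irreversible arcs so that Scarf's ordinal condition unwinds to strong domination. The only cosmetic difference is that you index the rows by \emph{all} nonempty cliques whereas the cited proof uses only the maximal ones; that change is harmless for the existence statement (every clique sits inside a maximal one, and a subset of a clique is still a clique) though the maximal-clique version is what the paper needs later to keep the Scarf instance polynomial in size for {\sc{3-Strong Kernel}}.
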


\begin{framed}
\noindent {\sc{Strong Kernel}} : Given a clique-acyclic digraph $D(V,E)$, find a strong fractional kernel.
\end{framed}

\subsection{A Game-theoretic Kernel Problem}

We define two variants of {\sc{Strong Kernel}} that play a crucial role in our reductions in the next sections. Two cycles in $D$ are said to be node-disjoint if they do not have any nodes in common. A proper cycle $C$ in a digraph $D(V,A)$ is called {\em{homogeneous}} if for every $v \in C$, $u \notin C$, $(u,v) \in A$ implies $(u,v') \in A$ for all $v' \in C$.

\begin{framed}
\noindent {\sc{3-Strong Kernel}} : Given a clique-acyclic digraph $D(V,E)$, in which every maximal clique is of size at most 3, and all proper cycles are homogeneous and node-disjoint, find a strong fractional kernel.
\end{framed}

Let $F = \{f(u)\ |\ u \in V\}$ be a fractional kernel of a digraph $D(V,A)$. We look at $F$ from a game-theoretic perspective. The nodes of the digraph represent {\em{players}} and $f(u)$ represents the {\em{cost}} incurred by player $u$. A function $F$ is said to be a Nash equilibrium if no player (say $u$) can decrease its cost ($f(u)$) {\em{unilaterally}} without violating the conditions (involving $u$) of fractional kernel.

\begin{framed}
\noindent {\sc{3-Kernel Nash}} : Given a clique-acyclic digraph $D(V,E)$, in which every maximal clique is of size at most 3, and all proper cycles are homogeneous and node-disjoint, find a fractional kernel that is a Nash equilibrium.
\end{framed}

\subsection{Our Approach}

We write {\sc{A}} $\leq_P$ {\sc{B}} to say that {\sc{A}} is polynomial-time reducible to {\sc{B}}. In Section \ref{sec:ppad}, we prove that {\sc{3-Kernel Nash}} $\leq_P$ {\sc{3-Strong Kernel}} $\leq_P$ {\sc{Scarf}} $\leq_P$ {\sc{End Of The Line}}. This proves that {\sc{3-Kernel Nash}}, {\sc{3-Strong Kernel}} and {\sc{Scarf}} are in PPAD. In Section \ref{sec:ppadcomplete}, we prove that {\sc{3-Dimensional Brouwer}} $\leq_P$ {\sc{3-Kernel Nash}}. Since {\sc{3-Dimensional Brouwer}} is PPAD-complete \cite{nash-ppad}, this implies that {\sc{3-Kernel Nash}}, {\sc{3-Strong Kernel}} and {\sc{Scarf}} are PPAD-complete. In Section \ref{sec:fspp}, we prove that fractional stable paths problem ({\sc{Fspp}}) is PPAD-hard. We do this by showing that {\sc{3-Kernel Nash}} $\leq_P$ {\sc{Fspp}}.

\section{Reductions to {\sc{End Of The Line}}}\label{sec:ppad}

In this Section, we prove that {\sc{3-Kernel Nash}} $\leq_P$ {\sc{3-Strong Kernel}} $\leq_P$ {\sc{Scarf}} $\leq_P$ {\sc{End Of The Line}}.

\begin{lemma}\label{lem:threegame}
{\sc{3-Kernel Nash}} $\leq_P$ {\sc{3-Strong Kernel}}.
\end{lemma}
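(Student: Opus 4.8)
The plan is to reduce via the \emph{identity on instances}: given a digraph $D$ that is a legal input to {\sc{3-Kernel Nash}} (clique-acyclic, all maximal cliques of size at most $3$, all proper cycles homogeneous and node-disjoint), hand the very same $D$ to {\sc{3-Strong Kernel}} — the two problems have exactly the same class of admissible inputs, and by Theorem~\ref{clique-acyclic} every such $D$ has a strong fractional kernel, so the target instance is legitimate and {\sc{3-Strong Kernel}} returns some $f$. The real content is the backward map $g$. A strong fractional kernel need not itself be a Nash equilibrium: a vertex $u$ can satisfy its domination inequality $\sum_{w\in I(u)}f(w)\ge 1$ with strict slack even though some clique $K\subseteq I(u)$ is tight (a small example in which a ``source'' vertex has two non-adjacent in-neighbours shows this), so that $u$ could still decrease $f(u)$. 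Hence $g$ will \emph{post-process} the returned kernel; note that $g$ uses only that $f$ is a fractional kernel — strongness is needed merely to guarantee existence.

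The post-processing is a ``tightening'' descent. While there is a player $u$ with $f(u)>0$ such that every domination inequality $\sum_{w\in I(v)}f(w)\ge 1$ with $u\in I(v)$ has strictly positive slack, pick such a $u$ (say of smallest index, so $g$ is a well-defined polynomial-time function), set $\delta_u:=\min\{\sum_{w\in I(v)}f(w)-1 : u\in I(v)\}>0$, and replace $f(u)$ by $f(u)-\min(f(u),\delta_u)$; output $f$ once no such $u$ remains. I would check three things. (i) \emph{Invariance:} each step keeps $f$ a fractional kernel — fractional independence only imposes upper bounds on sums and we only decrease a coordinate, while fractional domination is preserved because for every $v$ whose sum changes we decreased $f(u)$ by at most the slack of $v$'s inequality. (ii) \emph{Termination in $O(n)$ steps:} the potential $\Phi(f):=\#\{v : \sum_{w\in I(v)}f(w)=1\}+\#\{v : f(v)=0\}$ strictly increases at each step — decreasing by $\delta_u$ makes the minimising inequality newly tight (it had slack $\delta_u>0$), and since $u$ was decreasable no inequality containing $u$ was tight beforehand, so nothing gets un-tightened and no zero is destroyed; decreasing by $f(u)<\delta_u$ creates a new zero and, again, destroys no tight inequality — and $\Phi\le 2n$. (iii) \emph{Correctness at termination:} for every $u$ we then have $f(u)=0$ or some domination inequality containing $u$ is tight, so $u$ cannot unilaterally decrease $f(u)$ without violating a condition of fractional kernel involving $u$; thus the output is a fractional kernel that is a Nash equilibrium, i.e. a solution of the original {\sc{3-Kernel Nash}} instance.

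The only place that needs genuine care is the potential argument in step (ii): the bookkeeping that decreasing $f(u)$ never converts a previously tight domination inequality into a non-tight one and never turns a zero coordinate positive. Both follow because the only inequalities whose value changes are those containing $u$, all of which have strictly positive slack before the step by the choice of $u$, and because the update never raises any coordinate. Everything else is routine, and since the descent performs at most $2n$ iterations, each a linear-time scan, $g$ is polynomial-time, so together with the identity instance map this is a valid polynomial-time reduction, proving {\sc{3-Kernel Nash}} $\leq_P$ {\sc{3-Strong Kernel}}.
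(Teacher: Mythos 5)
Your proof is correct, but it takes a genuinely different route from the paper. Both reductions use the identity map on instances (the input classes coincide, and Theorem~\ref{clique-acyclic} guarantees the target instance has a solution), and both post-process the returned kernel by driving down coordinates; the differences are in how. The paper's \textbf{Compute Nash} first finds a maximum-weight cycle cover, contracts each homogeneous proper cycle into a super-node, and then runs a decrement-and-compensate loop: it decreases $w(v)$ when $v$'s \emph{own} domination inequality has slack, and whenever this violates another vertex $v'$'s domination inequality it \emph{increases} $w(v')$ to patch it; acyclicity after contraction is what the paper leans on for termination, and the super-nodes are re-expanded with halved weights at the end. Your scheme instead decrements $f(u)$ only when \emph{every} domination inequality containing $u$ has slack, and decrements by exactly the minimum slack (capped at $f(u)$); because you never increase any coordinate, fractional independence is preserved for free, no compensation step is needed, and you replace the paper's cycle machinery with a clean monotone potential $\Phi \le 2n$ that increases at every step (a newly tight constraint or a newly zero coordinate, and nothing un-tightens or un-zeroes). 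This gives a simpler and more self-contained termination argument and avoids the delicate bookkeeping around whether increments might re-open slack or break independence. One point worth making explicit, since it is where the two algorithms diverge: the paper's definition ``violating the conditions (involving $u$) of fractional kernel'' is most naturally read, as you read it, to mean \emph{all} constraints in which $f(u)$ appears (every $\sum_{w\in I(v)} f(w)\ge 1$ with $u\in I(v)$, plus clique constraints through $u$), and under that reading your terminal condition is exactly the Nash condition; if one instead read it as only $u$'s own domination inequality, the two halting criteria would differ, so it is good form to state the reading before the descent, as you implicitly do.
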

\begin{proof}
The input digraphs to {\sc{3-Kernel Nash}} and {\sc{3-Strong Kernel}} are the same (say $D$). Let $W = \{w(u)\ |\ u \in V\}$ be a solution to {\sc{3-Strong Kernel}}. Given $W$, the algorithm ${\bf{Compute Nash}}(W)$ finds a solution to {\sc{3-Kernel Nash}} in polynomial-time.

\noindent \line(1,0){400} \\
\noindent ${\bf{Compute Nash}}(W)$ \\
\indent Find a maximum weight cycle cover of $D$, (the weight of each edge is set to unity). \\
\indent Identify each homogeneous cycle into {\em{super node}} and remove multiple edges. \\
\indent While there is a node $v$ such that $\sum_{u \in I(v)}{w(u)} = 1 + \delta$, $\delta > 0$ and $w(v) > 0$\\
\indent \{ \\
\indent \indent $w(v) := w(v) - min(\delta, w(v))$ \\
\indent \indent For each $v' \neq v$ such that $v \in I(v')$ \\
\indent \indent \indent if $\sum_{u \in I(v')}{w(u)} = 1 - \alpha$, $\alpha > 0$ \\
\indent \indent \indent \indent $w(v') := w(v') + \alpha$ \\
\indent \} \\
\indent Restore the multiple edges. \\
\indent Expand the super nodes into original nodes of $D$. \\
\indent For each node $u$ in a super node $v$ \\
\indent \indent $w(u) := w(v)/2$ \\
\indent return $W$ \\
\noindent \line(1,0){400} \\

A cycle cover of a graph is a set of cycles such that every vertex is part of exactly one cycle.
Weight of a cycle cover is the sum of weights of the {\em{edges}} of the cycles. Finding maximum weight cycle cover of $D$ takes polynomial time. Since there are no cycles (after identifying homogeneous cycles) each iteration of the while loop takes at most $O(|V|)$ time.
\end{proof}

\begin{lemma}\label{lem:kernelppad}
{\sc{3-Strong Kernel}} $\leq_P$ {\sc{Scarf}}.
\end{lemma}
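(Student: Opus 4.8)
The plan is to express an instance of \textsc{3-Strong Kernel} as an instance of \textsc{Scarf} by encoding the "fractionally independent" constraints into the matrix $B$ and the "strongly dominating" requirement into the preference matrix $C$. Recall that a strong fractional kernel is a non-negative function $f$ on $V$ that is (a) fractionally independent, i.e. $\sum_{u\in K} f(u)\le 1$ for every clique $K$, and (b) strongly dominating, i.e. for each $v$ there is a clique $K\subseteq I(v)$ with $\sum_{u\in K} f(u)\ge 1$. First I would introduce, for each maximal clique $K$ of $D$, a slack variable, so that the independence inequality $\sum_{u\in K} f(u)\le 1$ becomes an equality $\sum_{u\in K} f(u)+s_K = 1$; since every maximal clique has size at most $3$ and the cliques are few (polynomially many), the resulting system $B\alpha=b$ has the right shape. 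The columns indexed $1,\dots,m$ must form an identity block; I would set this up so that the first $m$ columns correspond to the slack variables $s_K$ (one per clique-constraint), each appearing with coefficient $1$ in its own row and $0$ elsewhere, and the remaining $n-m$ columns correspond to the vertex variables $f(u)$, where column $u$ has a $1$ in row $K$ precisely when $u\in K$. The vector $b$ is the all-ones vector, which is non-negative, and boundedness of $\{\alpha\ge 0: B\alpha=b\}$ follows because every $f(u)$ appears in at least one clique constraint (every vertex lies in some maximal clique) and all coefficients are $0/1$, so each coordinate is bounded by $1$.

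Next I would design the cost matrix $C$ so that the second conclusion of Scarf's lemma — for every column $k$ there is a row $i$ with $c_{ik}\le c_{ij}$ for all $j\in J$ — translates exactly into the strong-domination condition. The required structural constraint on $C$ is $c_{ii}\le c_{ik}\le c_{ij}$ for $i,j\le m$, $i\ne j$, $k>m$; that is, in each row $i\le m$ the diagonal entry is smallest, the other "basic" columns are largest, and the vertex columns sit in between. I would index rows by clique-constraints as well. The idea is to place the vertex variables of clique $K$ with small cost in row $K$ (so that these vertices "dominate" in that coordinate) and to arrange, for each vertex $v$, that the set of rows in which $v$'s in-neighbors are cheap is exactly $\{K: K\subseteq I(v),\ K \text{ a clique}\}$, using the homogeneity and node-disjointness hypotheses to make the bookkeeping consistent and to avoid conflicting orderings across overlapping cliques. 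The upshot should be: a Scarf subset $J$ corresponds to a choice, for each clique constraint, of which vertices are "active," and the domination clause of Scarf forces that for every $v$ some clique inside $I(v)$ has its variables saturating its constraint, which is precisely strong domination. Conversely the independence is automatic from $B\alpha=b$, $\alpha\ge 0$.

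The hard part will be constructing $C$ so that it simultaneously (i) satisfies the rigid inequality pattern $c_{ii}\le c_{ik}\le c_{ij}$ demanded by the hypothesis of Scarf's lemma, and (ii) makes Scarf's domination conclusion coincide with strong domination rather than merely fractional domination — the distinction being that we need a \emph{clique} inside $I(v)$, not an arbitrary subset, to carry weight $\ge 1$. This is exactly where the special structure guaranteed by \textsc{3-Strong Kernel} must be used: maximal cliques of size $\le 3$ keep the number of rows polynomial and the local comparisons manageable, homogeneity ensures that an external in-neighbor of one vertex of a clique is an in-neighbor of the whole clique (so "clique inside $I(v)$" behaves coherently), and node-disjointness of proper cycles prevents cyclic preference patterns that would break the total-order requirement within a row. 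I would verify these properties clique-by-clique, then check that the map $g$ sending a Scarf solution $\alpha$ back to $f(u):=\alpha_u$ yields a strong fractional kernel, completing the reduction. Finally I would note that the whole construction is clearly computable in polynomial time since $D$ has only polynomially many maximal cliques.
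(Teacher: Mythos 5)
Your high-level plan matches the paper's: index rows of $B$ and $C$ by maximal cliques, use slack variables to get an identity block in $B$, set $b=\mathbf{1}$, and observe that bounding maximal cliques by size $3$ makes the number of rows (hence the construction) polynomial. The paper itself simply cites the Aharoni--Holzman construction from \cite{frackernel} and records that last observation; that observation is the entire content of the lemma, and you have it.

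However, there are two problems with the rest of your write-up. First, you never actually construct $C$ --- you describe what it should accomplish and then say ``the hard part will be constructing $C$,'' which as a standalone proof is a genuine gap. In the Aharoni--Holzman construction the entries $c_{K,v}$ for $v\in K$ are determined by a linear order on each maximal clique $K$; such an order exists precisely because $D$ is \emph{clique-acyclic} (the irreversible arcs inside $K$ form an acyclic tournament-like relation and can be topologically sorted), and this order is what makes the ordinal condition $c_{ii}\le c_{ik}\le c_{ij}$ satisfiable row by row and makes Scarf's ``for every $k$ there is a row $i$'' conclusion coincide with strong domination. Second, and related, you invoke homogeneity and node-disjointness of proper cycles to justify the $C$ construction, but these hypotheses play no role here: Aharoni--Holzman's Theorem~\ref{clique-acyclic} and its Scarf-based proof require only clique-acyclicity (and, for polynomiality, the clique-size bound). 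The extra structure in \textsc{3-Strong Kernel} is imposed so that the \emph{other} direction of the paper's chain --- the reduction from \textsc{3-Dimensional Brouwer} in Section~\ref{sec:ppadcomplete} --- goes through, not to make this reduction to \textsc{Scarf} work.
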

\begin{proof}
Aharoni and Holzman \cite{frackernel} proved Theorem \ref{clique-acyclic} by constructing matrices $B$ and $C$ from the digraph $D$ and a vector $b$ of all 1's and appealing to Scarf's lemma. The rows of $B$ and $C$ are indexed by the set of maximal cliques in $D$. In an instance of {\sc{3-Strong Kernel}} all maximal cliques are of size at most 3. Hence, the number of rows is polynomially bounded in the size of $D$. Hence {\sc{3-Strong Kernel}} is polynomial-time reducible to {\sc{Scarf}}.
\end{proof}

Now we show that {\sc{Scarf}} reduces to the {\sc{End Of The Line}}. The reduction is essentially the original proof of Scarf's lemma \cite{scarf} as appeared in \cite{frackernel}. Let $J \subseteq [n]$. A column $c_k$ of $C$ is said to be {\em{$J$-subordinated at the index $i$}} if $c_{ik} \leqslant c_{ij}$ for every $j \in J$. It is said to be {\em{$J$-subordinated}} if it is $J$-subordinated at some $i$. We say that $J$ is {\em{subordinating}} if every column of $C$ is $J$-subordinated. Note that if $J' \subseteq J$ and $J$ is subordinating for $C$ then so is $J'$. A subset $J$ of size $m$ of [$n$] is called a feasible basis of ($B, b$) if the columns $b_j, j \in J$, are linearly independent, and there exist non-negative numbers $\alpha_j, j \in J$, such that $\sum_{j \in J}{{\alpha_j}{b_j}} = b$. In other words, $b$ belongs to the cone spanned by the columns $b_j, j \in J$.

The pair ($B, b$) is {\em{non-degenerate}} if $b$ is not in the cone spanned by fewer than $m$ columns of $B$. We call $C$ ordinal-generic if all the elements in each row of $C$ are distinct. There exists a small perturbation $b'$ of $b$ such that the pair ($B, b$) is non-degenerate and every feasible basis for ($B, b$) is also a feasible basis for ($B, b$). By slightly perturbing $C$ we can obtain an ordinal-generic matrix $C'$ satisfying
the assumptions of the theorem, and if the perturbation is small enough then any subordinating set for $C'$ is also subordinating for $C$. Hence, we may assume that ($B, b$) is non-degenerate, and that $C$ is ordinal-generic.

Lemma \ref{lem:simplex} is well-known and is at the heart of the simplex algorithm.
Its proof requires that $\{\alpha\in{\mathbb{R}}_{+}^n : B{\alpha}=b\}$ is bounded and $(B,b)$ is non-degenerate. For the proof of Lemma \ref{lem:replace}, we refer the reader to \cite{scarf} or \cite{frackernel} or page 1127 of the three volume series of Schrijver's book \cite{schrijver-book}.

\begin{lemma}\label{lem:simplex}
Let $J$ be a feasible basis for $(B, b)$, and $k\in[n]\setminus{J}$. Then there exists a unique $j \in J$ such that $J+k-j$ $(i.e., J\cup\{k\}{\setminus}\{j\})$ is a feasible basis. Also, given $J$ and $k$, we can find $j$ in polynomial-time.
\end{lemma}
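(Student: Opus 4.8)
The plan is to carry out the classical minimum-ratio pivot step that underlies the simplex method. Since $J$ is a feasible basis, the columns $\{b_j : j \in J\}$ form a basis of $\mathbb{R}^m$, so $b_k$ has a unique expansion $b_k = \sum_{j \in J}\lambda_j b_j$. Substituting this into the feasible representation $b = \sum_{j \in J}\alpha_j b_j$ yields, for every $\theta \geqslant 0$, the identity $b = \sum_{j \in J}(\alpha_j - \theta\lambda_j)\,b_j + \theta\, b_k$. First I would observe, from non-degeneracy of $(B,b)$, that every $\alpha_j$ is \emph{strictly} positive --- otherwise $b$ would lie in the cone spanned by the remaining $m-1$ columns indexed by $J$. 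Next I would use boundedness of $\{\alpha \in \mathbb{R}_{+}^n : B\alpha = b\}$ to conclude that at least one $\lambda_j$ is positive: if all $\lambda_j \leqslant 0$, then the vectors $\alpha(\theta)$ given by $\alpha(\theta)_j = \alpha_j - \theta\lambda_j$ for $j \in J$, $\alpha(\theta)_k = \theta$, and $\alpha(\theta)_\ell = 0$ otherwise, would be non-negative solutions of $B\alpha = b$ for all $\theta \geqslant 0$, an unbounded ray contradicting the hypothesis.

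With $\Lambda := \{j \in J : \lambda_j > 0\}$ now known to be non-empty, set $\theta^\ast = \min_{j \in \Lambda} \alpha_j/\lambda_j$ and let $j^\ast \in \Lambda$ attain this minimum; the candidate replacement is $J' = J + k - j^\ast$. Plugging $\theta = \theta^\ast$ into the displayed identity makes the coefficient of $b_{j^\ast}$ vanish, leaves $\theta^\ast > 0$ as the coefficient of $b_k$, and keeps every remaining coefficient $\alpha_j - \theta^\ast\lambda_j$ non-negative by the choice of $\theta^\ast$; this exhibits $b$ as a non-negative combination of the columns indexed by $J'$. Linear independence of $\{b_j : j \in J'\}$ follows from the standard exchange argument in a vector space, using that $\lambda_{j^\ast} \neq 0$, i.e. the replaced column $b_{j^\ast}$ appears with non-zero coefficient in the expansion of $b_k$. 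Hence $J'$ is a feasible basis.

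For uniqueness I would argue twice. First, $j^\ast$ is the \emph{unique} minimizer: if two distinct indices both attained $\theta^\ast$, then at $\theta = \theta^\ast$ two coefficients vanish simultaneously, so $b$ is a non-negative combination of only $m-1$ columns, contradicting non-degeneracy. Second, no $j_0 \in J$ with $j_0 \neq j^\ast$ can yield a feasible basis $J + k - j_0$: given any non-negative representation $b = \beta_k b_k + \sum_{j \in J,\, j \neq j_0}\beta_j b_j$, substitute $b_k = \sum_{j \in J}\lambda_j b_j$ and compare with the unique representation of $b$ in the basis $\{b_j : j \in J\}$; this forces $\alpha_{j_0} = \beta_k\lambda_{j_0}$ and $\alpha_j = \beta_j + \beta_k\lambda_j$ for $j \neq j_0$. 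Non-degeneracy ($\alpha_{j_0} > 0$) then forces $\lambda_{j_0} > 0$ and $\beta_k = \alpha_{j_0}/\lambda_{j_0}$, and non-negativity of $\beta_{j^\ast}$ gives $\alpha_{j^\ast}/\lambda_{j^\ast} \geqslant \alpha_{j_0}/\lambda_{j_0}$, contradicting the strict minimality of $j^\ast$ over $\Lambda$. Polynomial-time computability is then immediate: the $\lambda_j$ come from solving one $m \times m$ linear system by Gaussian elimination, and $\theta^\ast$ and $j^\ast$ are read off from at most $m$ ratios.

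The only real care needed is to deploy the two hypotheses at precisely the right places: boundedness is exactly what prevents the ratio test from being vacuous (it guarantees some $\lambda_j > 0$), while non-degeneracy is exactly what makes the ratio test have a \emph{unique} winner and keeps the new coefficient $\theta^\ast$ strictly positive, so that the resulting basis is again non-degenerate for subsequent pivots. The rest is routine linear-algebra bookkeeping.
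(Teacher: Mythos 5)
Your proof is correct and is exactly the classical simplex ratio-test pivot argument that the paper invokes without spelling out: the paper states only that the lemma ``is well-known and is at the heart of the simplex algorithm'' and that its proof requires boundedness and non-degeneracy, and you deploy those two hypotheses in precisely the places the paper indicates (boundedness to make the set $\Lambda$ of positive $\lambda_j$ nonempty, non-degeneracy for strict positivity of each $\alpha_j$ and uniqueness of the ratio-test winner). No discrepancy to report.
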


\begin{lemma}\label{lem:replace}(\cite{scarf})
Let $K$ be a subordinating set for $C$ of size $m$-1. Then there are precisely two elements $j \in [n]{\setminus}K$ such that $K+j$ is subordinating for $C$, unless $K\subseteq[m]$, in which case there exists precisely one such $j$.
\end{lemma}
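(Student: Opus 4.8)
The plan is to carry out Scarf's combinatorial pivoting argument directly on the matrix $C$, using ordinal-genericity at every turn to replace a non-strict inequality by a strict one. For a subordinating set $J$ and a row $i$, write $t_i(J)=\min_{j\in J}c_{ij}$; by ordinal-genericity the minimizing column is unique, so there is a well-defined map $\rho_J\colon[m]\to J$ sending $i$ to it, and $c_k$ is $J$-subordinated at $i$ exactly when $c_{ik}\leqslant t_i(J)$. First I would extract the structure of $K$: since $K$ is subordinating, each of its $m-1$ columns is itself $K$-subordinated, hence equals $\rho_K(i)$ for some $i$, so $\rho_K$ is surjective; matching $m$ rows to $m-1$ columns then forces a unique column $p\in K$ with $|\rho_K^{-1}(p)|=2$, say $\rho_K^{-1}(p)=\{i_1,i_2\}$, every other column of $K$ being hit exactly once. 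Adjoining a new column $j\notin K$ changes $\rho$ only on the ``capture set'' $S(j)=\{i:c_{ij}<t_i(K)\}$, and $K+j$ can be subordinating only if the new map is a bijection onto $K+j$; a short case check --- $j$ must enter the image, and it may not displace $p$ or any once-hit column from the image --- shows this forces $S(j)=\{i_1\}$ or $S(j)=\{i_2\}$.

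For $r\in\{1,2\}$ put $\mathcal{A}_r=\{k\in[n]:c_{i'k}>t_{i'}(K)\text{ for every }i'\neq i_r\}$. The routine facts are: $\mathcal{A}_r$ is exactly the set of columns whose only $K$-subordination row is $i_r$ (such a $k$ with $c_{i_rk}>t_{i_r}(K)$ would be subordinated at no row, impossible), $\mathcal{A}_r\cap K=\emptyset$, $p\notin\mathcal{A}_1\cup\mathcal{A}_2$, and $\mathcal{A}_1\cap\mathcal{A}_2=\emptyset$ (a column in both would be $K$-subordinated nowhere). The crux is the equivalence: \emph{$K+j$ is subordinating if and only if, for some $r$, $j\in\mathcal{A}_r$ and $c_{i_rj}=\max\{c_{i_rk}:k\in\mathcal{A}_r\}$.} Indeed $S(j)=\{i_r\}$ holds precisely for $j\in\mathcal{A}_r$; after adjoining such a $j$ the only threshold that changes is $t_{i_r}$, dropping to $c_{i_rj}$, so the only columns that can lose subordination are the other members of $\mathcal{A}_r$, and they all survive iff $c_{i_rj}$ is maximal over $\mathcal{A}_r$. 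Hence for each $r$ there is at most one admissible $j$, it exists iff $\mathcal{A}_r\neq\emptyset$, the two choices are distinct when both exist, and the number of valid $j$ equals $\bigl|\{r\in\{1,2\}:\mathcal{A}_r\neq\emptyset\}\bigr|$.

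It then remains to decide, using the identity-like structure $c_{ii}\leqslant c_{ik}\leqslant c_{ij}$ ($i,j\leqslant m$, $i\neq j$, $k>m$), when a direction is empty. If $K\not\subseteq[m]$, fix a non-special $l_0\in K$; then for each $r$ and each row $i'\neq i_r$ one gets $t_{i'}(K)\leqslant c_{i'l_0}<c_{i'i_r}$, so the column indexed $i_r$ lies in $\mathcal{A}_r$, both $\mathcal{A}_1,\mathcal{A}_2$ are nonempty, and there are exactly two valid $j$. If $K\subseteq[m]$, then $K=[m]\setminus\{q\}$ for a unique $q$, the doubly-hit pair is $\{i_1,i_2\}=\{q,q'\}$ with $q'=\rho_K(q)$, and one computes $\mathcal{A}_q=\{q\}\cup\{m+1,\dots,n\}\neq\emptyset$ while $\mathcal{A}_{q'}=\emptyset$ --- no $k$ can have $c_{qk}>t_q(K)=c_{qq'}$, since $c_{qq}$ is the least entry of row $q$ and every non-special column is $\leqslant c_{qq'}$ there --- so there is exactly one valid $j$.

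The step I expect to be the real work is the equivalence in the second paragraph: one must pin down exactly which columns are subordinated only at a single row, and argue that lowering that one threshold to $c_{i_rj}$ preserves all of them while nothing outside $\mathcal{A}_r$ is affected, invoking ordinal-genericity each time a ``$\leqslant$'' has to be sharpened to a ``$<$''. Given that, the pigeonhole structure of $K$, the disjointness facts, and the two-case count via the special columns all follow directly; the whole argument is the classical Scarf pivot underlying \cite{scarf} and \cite{frackernel}.
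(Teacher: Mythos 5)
Your proof is correct and is the standard Scarf ordinal-pivot argument; the paper itself does not reprove this lemma but defers to \cite{scarf}, \cite{frackernel}, and Schrijver \cite{schrijver-book}, and your $\rho_K$/capture-set/$\mathcal{A}_r$ decomposition is exactly that classical pivot, carried out under the ordinal-genericity assumption the paper sets up just before the lemma. One small phrasing issue at the end: ``no $k$ can have $c_{qk}>c_{qq'}$'' is not literally true, since a special column $j\in K\setminus\{q'\}$ may well satisfy $c_{qj}>c_{qq'}$; what you mean (and what your argument actually shows) is that no $k$ \emph{outside} $K\setminus\{q'\}$ --- i.e., $k=q$, $k=q'$, or $k>m$ --- can, and those in $K\setminus\{q'\}$ are already excluded by the conditions at rows $i'\in K\setminus\{q'\}$, so $\mathcal{A}_{q'}=\emptyset$ does follow.
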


\begin{theorem}\label{thm:scarf}
{\sc{Scarf}} $\leq_P$ {\sc{End Of The line}}.
\end{theorem}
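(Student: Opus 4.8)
The plan is to turn Scarf's constructive proof of Theorem~\ref{thm:lem} (in the form it appears in \cite{frackernel}) into an {\sc{End Of The Line}} instance: expose the implicit directed line graph that the Scarf pivoting procedure traverses, and describe it by two polynomial-size circuits $S$ and $P$ together with a polynomial-time back-map $g$. As a preliminary step I would invoke the perturbation already justified above so that $(B,b)$ is non-degenerate and $C$ is ordinal-generic; any object we ultimately produce for the perturbed data is pulled back to the original data using the two facts recorded just before Lemma~\ref{lem:simplex} (a feasible basis for the perturbed $(B,b')$ is one for $(B,b)$, and a subordinating set for the perturbed $C'$ is one for $C$).

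Next I would set up the vertex set of the implicit graph $G$. A vertex is a combinatorial configuration that pairs a feasible basis $J\subseteq[n]$ of $(B,b)$ of size $m$ with a subordinating subset of size $m-1$ in a compatible way; a configuration whose size-$m$ part is itself subordinating yields a set $J$ as required by Theorem~\ref{thm:lem} and is called a \emph{solution configuration}. The procedure moves between configurations by two pivot rules: a \emph{cardinal} pivot governed by Lemma~\ref{lem:simplex} (the unique basis exchange that preserves feasibility) and an \emph{ordinal} pivot governed by Lemma~\ref{lem:replace} (the replacements that preserve subordination). Scarf's argument shows that, away from solution configurations and away from the distinguished configuration built from $J_0=[m]$ and $K_0=[m-1]$, these two rules mesh so that every configuration has exactly two neighbours in $G$; a solution configuration has exactly one; and $(J_0,K_0)$ has exactly one, because $K_0\subseteq[m]$ is the exceptional case of Lemma~\ref{lem:replace} and $B$ restricted to $[m]$ is the identity, so $J_0$ is the only feasible basis through $K_0$. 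Hence $G$ is a disjoint union of simple paths and cycles with a unique non-solution endpoint, namely $(J_0,K_0)$; orienting every path away from its unique non-solution end (when it has one) makes a successor function and a predecessor function well defined at every configuration that is neither a solution nor the start.

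Then I would encode all of this as circuits. I would fix a polynomial-length binary encoding of configurations (a configuration is determined by an $m$-subset of $[n]$ together with the element dropped from it), let the all-zeros string $0^{n'}$ encode $(J_0,K_0)$, and build circuits $S$ and $P$ that decode their input and: if it is not a legal configuration, act as the identity; if it is a solution configuration or is $(J_0,K_0)$, act so that the configuration becomes an end of a line ($P$ the identity at the start, and the appropriate choices at solutions); otherwise output the next, resp.\ previous, configuration along the chosen orientation. Each pivot step is either the Gaussian-elimination/min-ratio computation behind Lemma~\ref{lem:simplex} or the comparison-based search over the $m$ rows of $C$ behind Lemma~\ref{lem:replace}, both polynomial in $m$, $n$ and the bit length of the (perturbed) data, so $S$ and $P$ have polynomial size and satisfy $P(0^{n'})=0^{n'}\neq S(0^{n'})$; moreover $S$ and $P$ send legal configurations to legal configurations, so illegal strings are never returned as {\sc{End Of The Line}} solutions. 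Finally, the back-map $g$ takes any solution $x$ of {\sc{End Of The Line}} on $(S,P)$ --- necessarily the decoding of an end of a path other than $(J_0,K_0)$, hence of a solution configuration --- extracts its size-$m$ set $J$, recomputes for the original vector $b$ the unique non-negative $\alpha$ supported on $J$ with $B\alpha=b$ (an $m\times m$ linear system, solvable in polynomial time since $J$ is a feasible basis), and outputs $\alpha$; the two conditions of Theorem~\ref{thm:lem} hold because $J$ is a feasible basis and $J$ is subordinating for $C'$ and hence for $C$.

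The step I expect to be the main obstacle is the one in the second paragraph: verifying that the cardinal and ordinal pivot rules fit together into a graph of maximum degree two whose only non-solution degree-one vertex is $(J_0,K_0)$ --- equivalently, that the ``doors'' supplied by Lemma~\ref{lem:simplex} and those supplied by Lemma~\ref{lem:replace} can be matched consistently at each configuration, so that exactly two moves leave every non-solution, non-start configuration and neither the procedure nor its reverse ever gets stuck except at a solution or at the start. This is precisely the pseudomanifold core of Scarf's original argument as presented in \cite{scarf} and \cite{frackernel}; once it is in place, the encoding into circuits and the correctness of $g$ are routine. A secondary point requiring care is that $S$ and $P$ must be exact inverses along each line rather than merely ``neighbour'' maps, which is what forces the explicit orientation above and a small amount of case analysis at the two kinds of endpoints.
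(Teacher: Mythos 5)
Your high-level plan matches the paper's: extract the implicit graph that Scarf's pivoting procedure walks, argue that away from the distinguished start and the solutions every vertex has degree exactly two, and encode the walk as circuits. The paper's version of the graph is the bipartite graph on $m$-sets ($\mathcal{F}$ = feasible bases containing $1$, $\mathcal{S}$ = subordinating $m$-sets omitting $1$, with an $F$--$S$ edge iff $F\setminus S=\{1\}$) whereas you work with configurations $(J,K)$; these differ only by a line-graph--style reindexing, so that is fine. You are also right that the non-trivial combinatorics is that Lemma~\ref{lem:simplex} and Lemma~\ref{lem:replace} jointly force the degree bound with $[m]$ (resp.\ $(J_0,K_0)$) as the unique non-solution degree-one vertex.

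The genuine gap is in the orientation. You write that one may orient ``every path away from its unique non-solution end (when it has one),'' and then treat turning this into successor/predecessor circuits as ``a small amount of case analysis at the two kinds of endpoints.'' But that orientation is defined \emph{globally}: from a configuration in the middle of a path there is no way, using only polynomially much local data, to tell which of your two neighbours lies toward the non-solution end --- that would require walking the path. Worse, the components that are cycles or that join two solutions have no non-solution end at all, so the prescription is silent on them, yet a {\sc{End Of The Line}} instance must still supply a consistent $S$ and $P$ there. What is actually needed is a \emph{local} orientation rule --- one computable from the current configuration alone --- that agrees with itself along each line. This is precisely the ingredient the paper supplies by invoking Shapley's orientation of Lemke--Howson paths and Todd's extension to generalized complementary pivot algorithms (\cite{ShapleyOrientation}, \cite{ToddOrientation}), pointing to \cite{kprst-ppad} for the details as applied to Scarf. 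Without such a rule, your construction only establishes PPA membership, which the paper explicitly distinguishes from the PPAD membership being claimed. So the proposal should either cite and apply a Shapley/Todd-type sign rule (e.g.\ via determinant signs of the current basis), or concede that the directed version is being imported from the literature, as the paper does; calling it routine endpoint case analysis misstates where the difficulty lies.

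A smaller point: your encoding ``an $m$-subset together with the element dropped from it'' implicitly assumes the subordinating $(m-1)$-set $K$ is always $J\setminus\{j\}$ for some $j\in J$. In Scarf's procedure the relevant pairs can have $|J\cap K|=m-1$ with $K\not\subseteq J$, so the encoding needs to record $K$ (or the swapped-in element) explicitly; otherwise the decoding in $S$ and $P$ is ambiguous.
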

\begin{proof}
We shall construct a bipartite graph $\mathcal{G}$ with bipartition $\mathcal{F}$ and $\mathcal{S}$, where $\mathcal{F}$ is the set of all feasible bases containing 1, and $\mathcal{S}$ is the set of all subordinating sets of size $m$ not containing 1. An element $F$ of $\mathcal{F}$ and an element $S$ of $\mathcal{S}$ are joined by an edge from $F$ to $S$ if $F{\setminus}S=\{1\}$.

We shall prove, using {\em{end of the line argument}}, that there exists a set $J$ of size $m$ which is both subordinating and a feasible basis. Consider a set $F \in \mathcal{F}$ which is not subordinating, and assume that $F$ has positive degree in $\mathcal{G}$. Then the set $K=F{\setminus}\{1\}$ is subordinating. Applying Lemma \ref{lem:replace} to $K$, we see that $F$ has degree 2 in $\mathcal{G}$, unless $F=[m]$, in which case it has degree 1. By the properties of the matrix $C$ mentioned in Theorem \ref{thm:lem} it is easy to see that $[m]$ is in $F$ and is not subordinating.

Now consider a set $S \in \mathcal{S}$ which is not a feasible basis, and assume that
$S$ has positive degree in $\mathcal{G}$. Let $F$ be a neighbor of $S$, and let $s$ be the single element of $S{\setminus}F$. By Lemma \ref{lem:simplex} there exists a unique element $f$ of $F$ such that $F'=F+s-f$ is a feasible basis. If $f=1$ then $F'=S$, which contradicts our assumption about $S$. Thus $f\neq{1}$, and $F'$ is the unique element of $\mathcal{F}$, different from $F$, which is connected to $S$.

Hence, every vertex of $\mathcal{G}$ which is not both subordinating and a feasible basis has degree 0 or 2, apart from $[m]$, which has outdegree 1. Similarly a vertex which is both subordinating and a feasible basis, if it exists, has degree 1. Thus the connected component of $\mathcal{G}$ containing $[m]$ is a path, which must {\em{end}} at another vertex of degree 1, i.e., at a vertex which is both subordinating and a feasible basis.

The above proof of Scarf's lemma uses an ``{\em{undirected}} end of the line argument", thus showing the PPA-membership of {\sc{Scarf}}. To prove PPAD-membership of Scarf, we need a ``{\em{directed}} end of the line
argument". Shapley \cite{ShapleyOrientation} presented a geometric orientation rule for the equilibrium points of (nondegenerate) bimatrix games based on the Lemke-Howson algorithm. Extending Shapley's rule, Todd \cite{ToddOrientation} developed a similar orientation theory for generalized complementary pivot algorithms. Applying Todd's orientation technique we can prove PPAD-membership of Scarf. For more details we refer the reader to \cite{kprst-ppad}.

\end{proof}

\begin{corollary}
{\sc{3-Kernel Nash}}, {\sc{3-Strong Kernel}} and {\sc{Scarf}} are in PPAD.
\end{corollary}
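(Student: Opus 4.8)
The plan is simply to chain together the three reductions already established and invoke transitivity. Recall that by definition of \textbf{PPAD}, a total search problem lies in \textbf{PPAD} precisely when it is polynomial-time reducible to {\sc{End Of The Line}}; since {\sc{End Of The Line}} is itself the defining problem of the class, {\sc{Scarf}} $\leq_P$ {\sc{End Of The Line}} (Theorem~\ref{thm:scarf}) immediately gives {\sc{Scarf}} $\in$ \textbf{PPAD}. Composing with Lemma~\ref{lem:kernelppad} yields {\sc{3-Strong Kernel}} $\leq_P$ {\sc{End Of The Line}}, and composing further with Lemma~\ref{lem:threegame} yields {\sc{3-Kernel Nash}} $\leq_P$ {\sc{End Of The Line}}; hence all three problems lie in \textbf{PPAD}.

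In writing this out I would first record that each of the three problems is a \emph{total} search problem in the sense of the definition of \textbf{TFNP} given above: membership of a candidate solution is checkable in polynomial time (one verifies the defining equalities/inequalities of a fractional kernel, a strong fractional kernel, or the solution vector of {\sc{Scarf}}), and a solution is guaranteed to exist --- for {\sc{Scarf}} by Theorem~\ref{thm:lem}, for {\sc{3-Strong Kernel}} by Theorem~\ref{clique-acyclic} (clique-acyclicity being preserved by the extra structural restrictions), and for {\sc{3-Kernel Nash}} by the fact that the algorithm {\bf{Compute Nash}} of Lemma~\ref{lem:threegame} always returns a valid answer when fed any strong fractional kernel. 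This totality is what makes it legitimate to speak of these problems as members of \textbf{PPAD} rather than merely as promise problems.

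Next I would verify that the composition of polynomial-time reductions is again a polynomial-time reduction. If $f_1,g_1$ witness $\mathcal{A}\leq_P\mathcal{B}$ and $f_2,g_2$ witness $\mathcal{B}\leq_P\mathcal{C}$, then $f_2\circ f_1$ maps instances of $\mathcal{A}$ to instances of $\mathcal{C}$ in polynomial time (composition of polynomials is a polynomial, and the intermediate instance has size polynomial in the original), and $g_1\circ g_2$ maps any solution of the $\mathcal{C}$-instance back to a solution of the original $\mathcal{A}$-instance: given $z\in\mathcal{C}_{f_2(f_1(x))}$ we have $g_2(z)\in\mathcal{B}_{f_1(x)}$ and then $g_1(g_2(z))\in\mathcal{A}_x$. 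Applying this twice along {\sc{3-Kernel Nash}} $\leq_P$ {\sc{3-Strong Kernel}} $\leq_P$ {\sc{Scarf}} $\leq_P$ {\sc{End Of The Line}} gives the claim.

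There is essentially no obstacle here --- the corollary is a bookkeeping consequence of Lemmas~\ref{lem:threegame}, \ref{lem:kernelppad} and Theorem~\ref{thm:scarf}. The only point deserving a moment's care is the totality bookkeeping: one must make sure that at each stage the reduction takes a \emph{valid} (promise-satisfying) instance to a valid instance, so that the guaranteed solution downstream really does exist and the map $g$ is never applied to garbage. That is exactly what the existence theorems cited above, together with the structural observations in the proofs of the lemmas, provide.

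\QED
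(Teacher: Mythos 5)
Your proof is correct and takes exactly the same route as the paper: chain the reductions {\sc{3-Kernel Nash}} $\leq_P$ {\sc{3-Strong Kernel}} $\leq_P$ {\sc{Scarf}} $\leq_P$ {\sc{End Of The Line}} and conclude membership in \textbf{PPAD}. The additional remarks on totality and on composition of polynomial-time reductions are sound bookkeeping that the paper leaves implicit.
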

\begin{proof}
We have {\sc{3-Kernel Nash}} $\leq_P$ {\sc{3-Strong Kernel}} $\leq_P$ {\sc{Scarf}} $\leq_P$ {\sc{End Of The Line}}. Hence all these problems are in PPAD.
\end{proof}

\section{Proof of PPAD-completeness}\label{sec:ppadcomplete}

In this section we prove that {\sc{3-Kernel Nash}} is PPAD-complete. We show that {\sc{3-Dimensional Brouwer}} $\leq_P$ {\sc{3-Kernel Nash}}. Since {\sc{3-Dimensional Brouwer}} is PPAD-complete \cite{nash-ppad}, {\sc{3-Kernel Nash}} is also PPAD-complete.

We now present an outline of {\sc{3-Dimensional Brouwer}}. For more details we refer the reader to \cite{nash-ppad}. We are given a Brouwer function $\phi$ on the 3-dimensional unit cube, defined in terms of its values at the centers of $2^{3n}$ cubelets with side $2^{-n}$. At the center $c_{ijk}$ of the cubelet $K_{ijk}$ defined as
\begin{equation*}
K_{ijk} = \{(x,y,z) : i2^{-n}\ {\leqslant}\ x\ {\leqslant}\ (i+1)2^{-n},\ j2^{-n}\ {\leqslant}\ y\ {\leqslant}\ (j+1)2^{-n},\ k2^{-n}\ {\leqslant}\ z\ {\leqslant}\ (k+1)2^{-n}\},
\end{equation*}

where $i,j,k$ are integers in [$2^n$], the value of $\phi(c_{ijk})$ is $c_{ijk}+\delta_{ijk}$, where $\delta_{ijk}$ is one the following four vectors :
\begin{itemize}
\item{$\delta_0$ = ($-\alpha$,$-\alpha$,$-\alpha$)}
\item{$\delta_1$ = ($\alpha$,0,0)}
\item{$\delta_2$ = (0,$\alpha$,0)}
\item{$\delta_3$ = (0,0,$\alpha$)}
\end{itemize}

Here $\alpha > 0$ is much smaller than the cubelet side, say $2^{-2n}$. To compute $\phi$ at the centers of the cubelet $K_{ijk}$ we only need to know which of the four displacements to add. This is computed by a circuit $\mathcal{C}$ with $3n$ input bits and 2 output bits. $\mathcal{C}(i,j,k)$ is the index $r$ such that, if $c$ is the center of cubelet $K_{ijk}$, $\phi(c) = c + \delta_r$. $\mathcal{C}$ is such that $\mathcal{C}(0,j,k)=1$, $\mathcal{C}(i,0,k)=2$, $\mathcal{C}(i,j,0)=3$ (with conflicts resolved arbitrarily) and $\mathcal{C}(2^n-1,j,k) = \mathcal{C}(i,2^n-1,k) = \mathcal{C}(i,j,2^n-1)=0$, so that the function $\phi$ maps the boundary to the interior of the cube. A vertex of a cubelet is called {\em{panchromatic}} if among the eight cubelets adjacent to it there are four that have all four increments $\delta_0,\delta_1,\delta_2,\delta_3$. $\mathcal{C}$ is the only input to {\sc{3-dimensional Brouwer}}.

\begin{framed}
\noindent {\sc{3-Dimensional Brouwer}} : Given a circuit $\mathcal{C}$ as described above, find a panchromatic vertex.
\end{framed}

Daskalakis, Goldberg and Papadimitriou \cite{nash-ppad} proved that {\sc{3-dimensional Brouwer}} is PPAD-complete. They reduced {\sc{3-dimensional Brouwer}} to {\sc{3-Graphical Nash}} to prove that {\sc{3-Graphical Nash}} is PPAD-complete. We follow their approach to reduce {\sc{3-dimensional Brouwer}} to {\sc{3-Kernel Nash}}.

\begin{theorem}\label{thm:main}
{\sc{3-Dimensional Brouwer}} $\leq_P$ {\sc{3-Kernel Nash}}
\end{theorem}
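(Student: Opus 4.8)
The plan is to mimic the Daskalakis--Goldberg--Papadimitriou reduction from \textsc{3-Dimensional Brouwer} to \textsc{3-Graphical Nash}, but to encode the gadgets as a digraph whose strong fractional kernel (at a Nash equilibrium) simulates the arithmetic/logic operations on real-valued signals. First I would fix a \emph{signal representation}: a real number $x\in[0,1]$ stored in a ``wire'' is represented by the fractional kernel value $f(u)=x/2$ (or some fixed affine scaling) of a designated node $u$, with the clique-acyclicity, maximal-clique-size-$\le 3$, homogeneity and node-disjointness constraints of \textsc{3-Kernel Nash} dictating exactly which local linear (in)equalities $\sum_{u\in K} f(u)\le 1$ and $\sum_{u\in I(v)\cap K} f(u)\ge 1$ are available. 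The Nash-equilibrium requirement is what forces each relevant inequality to be \emph{tight}, turning the kernel constraints into equations; this is the analogue of the ``each player minimizes its own payoff/cost'' mechanism that pins down gadget values in the graphical-Nash reduction.

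The key steps, in order, are: (1) build elementary \emph{arithmetic gadgets} --- copy/fan-out, addition, subtraction truncated to $[0,1]$, multiplication by a rational constant, and comparison/threshold --- each as a small clique-acyclic digraph on $O(1)$ nodes with all proper cycles homogeneous and node-disjoint, and verify that at any fractional-kernel Nash equilibrium the output node's value equals the intended function of the input node values (up to the fixed polynomial error introduced by truncation); (2) show the gadgets \emph{compose}: since proper cycles must be node-disjoint, wires connecting gadgets must be designed so they introduce no new proper cycles, which I expect to enforce by making all inter-gadget arcs reversible (``bidirectional'') except inside the designated cyclic ``choice'' gadget; (3) simulate the Boolean circuit $\mathcal{C}$ of the \textsc{3-Dimensional Brouwer} instance by a polynomial-size network of these gadgets, producing, from the three coordinate signals $(x,y,z)$, the displacement index $r=\mathcal{C}(i,j,k)$ and hence an approximate Brouwer map $\phi$; (4) insert a ``fixed-point'' gadget --- a single homogeneous proper cycle whose node-disjointness from everything else is guaranteed by construction --- that forces the input signals to equal (a smoothed version of) $\phi$ applied to them, so that any Nash-equilibrium fractional kernel yields a point $p$ with $\|\phi(p)-p\|$ small; (5) a polynomial-time \emph{extraction map} $g$ that, given such $p$ (read off from the kernel values), locates a cubelet vertex near $p$ and outputs a panchromatic vertex, using the standard argument that an approximate fixed point of $\phi$ must lie near a panchromatic vertex because the four displacement directions are affinely independent.

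The main obstacle will be step (2) together with the \emph{node-disjoint, homogeneous proper cycles} restriction peculiar to \textsc{3-Kernel Nash}: ordinary graphical-Nash gadget constructions freely reuse shared nodes and create many overlapping cycles, none of which is allowed here, so I must route every signal and build every gadget so that the \emph{only} proper cycles in the entire digraph are the isolated two- or three-node ``choice/fixed-point'' cycles, and these must be pairwise node-disjoint. Establishing that a faithful simulation is still possible under this rigidity --- essentially, that the ``cost-minimization'' Nash condition can propagate exact values through a purely acyclic (outside the isolated cycles) clique structure of bounded maximal-clique size $3$ --- is the technical heart of the proof; once the gadgets and their correctness lemmas are in hand, the circuit simulation (step 3), the fixed-point enforcement (step 4), and the extraction of a panchromatic vertex (step 5) follow the \textsc{3-Dimensional Brouwer}-to-\textsc{Nash} template closely, with the usual bookkeeping to keep the accumulated approximation error below the $\alpha=2^{-2n}$ scale so that only genuinely panchromatic vertices can be produced.
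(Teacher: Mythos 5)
Your proposal takes essentially the same approach as the paper: both mimic the Daskalakis--Goldberg--Papadimitriou reduction, building clique-acyclic digraph gadgets with maximal cliques of size at most 3 and node-disjoint homogeneous proper cycles that compute arithmetic and boolean operations at Nash-equilibrium fractional kernels, then composing them to simulate the circuit $\mathcal{C}$ and deferring the circuit-level combination and brittle-comparator bookkeeping to \cite{nash-ppad}, exactly as you do. The paper's version is an explicit instantiation of your plan (with gadgets $\mathcal{G}_\neg$, $\mathcal{G}_\vee$, $\mathcal{G}_{\frac{1}{2}}$ drawn in a figure, the remaining operators derived from those, and a zero-indegree node forcing ${\bf p}[u]=1$ as the constant source), and it, too, asserts but does not carefully verify the wire-routing/cycle-disjointness property of the composed network that you correctly flag as the main technical obstacle in your step (2).
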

\begin{proof}
\begin{figure}[hc]
  \centering
  \includegraphics[width=6in]{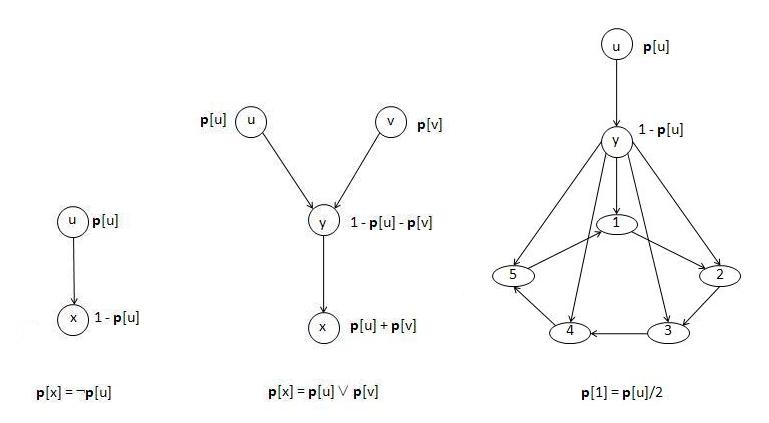}
  \caption{$\mathcal{G}_\neg$, $\mathcal{G}_\vee$, $\mathcal{G}_{\frac{1}{2}}$}%
  \label{fig:gadgets}%
\end{figure}
We reduce {\sc{3-dimensional Brouwer}} to {\sc{3-Kernel Nash}}. Given a circuit $\mathcal{C}$ with $3n$ input bits describing a Brouwer function, we shall construct an instance of {\sc{3-Kernel Nash}} that simulates $\mathcal{C}$. We construct instances of {\sc{3-Kernel Nash}} to simulate the required arithmetic ($=, -, +, <$, multiplication and division by 2) and boolean operations ($\vee, \wedge, \neg$). We then use the framework of \cite{nash-ppad}, \cite{normal-graphic} to combine these gadgets and simulate $\mathcal{C}$ and encode the geometric condition of fixed points in {\sc{3-dimensional Brouwer}}. Let $f(u)$ be the cost of a vertex $u$ in any Nash equilibrium. Henceforth we denote $f(u)$ by {\bf{p}}$[u]$\footnote{The reason behind this notation is that {\bf{p}}$[u]$ corresponds to the ``payoffs" in the graphical games of \cite{nash-ppad}. This makes our proof easier to understand using the simulation presented in \cite{nash-ppad}}. Let $\mathcal{G}_=, \mathcal{G}_-, \mathcal{G}_+, \mathcal{G}_<, \mathcal{G}_2, \mathcal{G}_{\frac{1}{2}}, \mathcal{G}_\vee, \mathcal{G}_\wedge, \mathcal{G}_\neg$ represent the required gadgets. All our gadgets are clique-acyclic, every maximal clique is of size at most 3, and all proper cycles are homogeneous and node-disjoint. Also, the costs of the input players do not depend on the costs of the output players. \\

\noindent {\bf{Boolean Operators}} : In the boolean gadgets, {\bf{p}}$[u]$ and {\bf{p}}$[v]$ are in $\{0,1\}$. Figure \ref{fig:gadgets} shows the construction of boolean gadgets $\mathcal{G}_\neg$ and $\mathcal{G}_\vee$. There are two input nodes $u$ and $v$ for $\mathcal{G}_\vee$ and only input node for $\mathcal{G}_\neg$. The value of {\bf{p}}$[x]$ is the result of applying the corresponding boolean function to the inputs {\bf{p}}$[u]$ and {\bf{p}}$[v]$. $\mathcal{G}_\wedge$ can be simulated using $\mathcal{G}_\neg$ and $\mathcal{G}_\vee$. \\

\noindent {\bf{Arithmetic Operators}} : In the arithmetic gadgets, {\bf{p}}$[u]$ and {\bf{p}}$[v]$ are real numbers in [0,1]. Note that $\mathcal{G}_=$ can be simulated using two $\mathcal{G}_\neg$ gadgets. $\mathcal{G}_+$ is same as $\mathcal{G}_\vee$. $\mathcal{G}_-$ can be simulated using two $\mathcal{G}_\neg$ gadgets and one $\mathcal{G}_+$. $\mathcal{G}_2$ can be simulated using $\mathcal{G}_=$ and $\mathcal{G}_+$. Figure \ref{fig:gadgets} shows our construction of $\mathcal{G}_{\frac{1}{2}}$. In this gadget, it is easy to see that in any Nash equilibrium {\bf{p}}$[1]$ = {\bf{p}}$[2]$ = {\bf{p}}$[3]$ = {\bf{p}}$[4]$ = {\bf{p}}$[5]$. Also {\bf{p}}$[u]$ = {\bf{p}}$[1]$ + {\bf{p}}$[2]$ = {\bf{p}}$[2]$ + {\bf{p}}$[3]$ = {\bf{p}}$[3]$ + {\bf{p}}$[4]$ = {\bf{p}}$[4]$ + {\bf{p}}$[5]$ = {\bf{p}}$[5]$ + {\bf{p}}$[1]$. Hence, {\bf{p}}$[1]$ = {\bf{p}}$[2]$ = {\bf{p}}$[3]$ = {\bf{p}}$[4]$ = {\bf{p}}$[5]$ = {\bf{p}}$[u]$/2. To simulate $\mathcal{G}_{<}$ using our arithmetic gadgets we create a node with ${\bf{p}}[x]=2^{-{\log}{\epsilon}}({\bf{p}}[v]-{\bf{p}}[u])$ for any given $\epsilon$. This is done using one $\mathcal{G}_-$ gadget and ${\log}{\epsilon}$ $\mathcal{G}_{\frac{1}{2}}$ gadgets. We also need a player $u$ with {\bf{p}}$[u]$ = $\frac{1}{2}$. Any node (say $v$) with zero indegree must have {\bf{p}}$[v]$ = 1 in any Nash equilibrium. Hence we can construct a node $u$ with {\bf{p}}$[u]$ = $\frac{1}{2}$ using the gadget $\mathcal{G}_{\frac{1}{2}}$.

To reduce {\sc{3-Dimensional Brouwer}} to {\sc{3-Kernel Nash}} we require three players representing the three coordinates. If a function $F$ is a Nash equilibrium then {\bf{p}}$[u]$ of each coordinate player is equal to its coordinate of the solution vertex of {\sc{3-Dimensional Brouwer}}. The rest of the proof is exactly same as in \cite{nash-ppad}. We refer the reader to Section 4 of \cite{nash-ppad} for the complete simulation and details of handling brittle comparators.
\end{proof}

\begin{corollary}
{\sc{3-Kernel Nash}}, {\sc{3-Strong Kernel}} and {\sc{Scarf}} are PPAD-complete.
\end{corollary}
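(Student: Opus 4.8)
The plan is short: the statement follows by combining the membership result already proved with the single hardness reduction of Theorem~\ref{thm:main}, and then pushing hardness along the chain of reductions of Section~\ref{sec:ppad}. First I would recall that the corollary at the end of Section~\ref{sec:ppad} establishes {\sc{3-Kernel Nash}}~$\leq_P$~{\sc{3-Strong Kernel}}~$\leq_P$~{\sc{Scarf}}~$\leq_P$~{\sc{End Of The Line}}, so that all three of {\sc{3-Kernel Nash}}, {\sc{3-Strong Kernel}} and {\sc{Scarf}} lie in PPAD. Thus it suffices to prove PPAD-hardness of just one of them, namely the ``smallest'' one, {\sc{3-Kernel Nash}}.

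Next I would invoke Theorem~\ref{thm:main}, which gives {\sc{3-Dimensional Brouwer}}~$\leq_P$~{\sc{3-Kernel Nash}}, together with the theorem of Daskalakis, Goldberg and Papadimitriou~\cite{nash-ppad} that {\sc{3-Dimensional Brouwer}} is PPAD-complete, i.e.\ every problem in PPAD reduces to it in polynomial time. Since polynomial-time reductions compose --- as is immediate from the definition of reduction given earlier, by composing the forward maps $f$ with the forward map of the new reduction and composing the solution-recovery maps $g$ in the opposite order --- every problem in PPAD reduces to {\sc{3-Kernel Nash}}; hence {\sc{3-Kernel Nash}} is PPAD-hard, and being also in PPAD it is PPAD-complete.

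Finally, using {\sc{3-Kernel Nash}}~$\leq_P$~{\sc{3-Strong Kernel}}~$\leq_P$~{\sc{Scarf}} once more, PPAD-hardness transfers up the chain: if a PPAD-hard problem {\sc{A}} satisfies {\sc{A}}~$\leq_P$~{\sc{B}} then {\sc{B}} is PPAD-hard, so {\sc{3-Strong Kernel}} and {\sc{Scarf}} are PPAD-hard as well; combined with their membership in PPAD, they too are PPAD-complete. The only thing to watch is the direction of the reductions --- that {\sc{A}}~$\leq_P$~{\sc{B}} together with hardness of {\sc{A}} yields hardness of {\sc{B}}, not the reverse --- and the trivial verification that a composition of two polynomial-time reductions is again one. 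There is no real obstacle here, since all the substance sits in Theorem~\ref{thm:main} and in the earlier membership corollary; the present corollary is a one-line consequence.
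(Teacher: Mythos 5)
Your proof is correct and follows essentially the same route as the paper: combine the membership chain {\sc{3-Kernel Nash}}~$\leq_P$~{\sc{3-Strong Kernel}}~$\leq_P$~{\sc{Scarf}}~$\leq_P$~{\sc{End Of The Line}} with the hardness reduction {\sc{3-Dimensional Brouwer}}~$\leq_P$~{\sc{3-Kernel Nash}} of Theorem~\ref{thm:main} and the known PPAD-completeness of {\sc{3-Dimensional Brouwer}}. You merely spell out more explicitly the composability of reductions and the direction in which hardness propagates, which the paper leaves implicit.
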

\begin{proof}
We have {\sc{3-Dimensional Brouwer}} $\leq_P$ {\sc{3-Kernel Nash}} $\leq_P$ {\sc{3-Strong Kernel}} $\leq_P$ {\sc{Scarf}} $\leq_P$ {\sc{End Of The Line}}. We know that {\sc{3-Dimensional Brouwer}} is PPAD-complete \cite{nash-ppad}. Hence, {\sc{3-Kernel Nash}}, {\sc{3-Strong Kernel}} and {\sc{Scarf}} are PPAD-complete.
\end{proof}

\begin{corollary}
{\sc{Strong Kernel}} is PPAD-hard.
\end{corollary}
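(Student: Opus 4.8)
The plan is to observe that {\sc{3-Strong Kernel}} is nothing but {\sc{Strong Kernel}} restricted to a subclass of inputs, so the hardness of {\sc{Strong Kernel}} follows from a trivial ``inclusion'' reduction together with the previous corollary. Concretely, I would argue {\sc{3-Strong Kernel}} $\leq_P$ {\sc{Strong Kernel}} as follows: every instance of {\sc{3-Strong Kernel}} is, by definition, a clique-acyclic digraph $D(V,E)$ (one that additionally has the properties that every maximal clique has size at most $3$ and all proper cycles are homogeneous and node-disjoint). Hence $D$ is already a legal input to {\sc{Strong Kernel}}, and the reduction map $f$ is simply the identity on digraphs. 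The asked-for object is the same in both problems — a strong fractional kernel of $D$ — so the solution-translation map $g$ is also the identity. Thus for any strong fractional kernel $F$ returned for the {\sc{Strong Kernel}} instance $f(D)=D$, $g(F)=F$ is a valid solution to the {\sc{3-Strong Kernel}} instance $D$. Both $f$ and $g$ are polynomial-time computable, so the reduction is valid.

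Combining this with the previous corollary — which established that {\sc{3-Strong Kernel}} is PPAD-complete, and in particular PPAD-hard — and with transitivity of polynomial-time reductions, we conclude that {\sc{Strong Kernel}} is PPAD-hard. (Note that we do not claim PPAD-\emph{completeness} of {\sc{Strong Kernel}}, because a general clique-acyclic digraph may have maximal cliques of unbounded size, so the row count of the matrices $B$ and $C$ in the Aharoni--Holzman construction need not be polynomially bounded, and Lemma \ref{lem:kernelppad} does not directly give PPAD-membership of {\sc{Strong Kernel}}; hence only hardness is asserted.)

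There is essentially no technical obstacle here. The one point that deserves a sentence of care is verifying that the \emph{output specification} of {\sc{3-Strong Kernel}} and {\sc{Strong Kernel}} coincide, i.e.\ that ``strong fractional kernel'' means exactly the same thing in both definitions (a nonnegative function on $V$ that is fractionally independent and strongly dominating, as defined in Section \ref{sec:kernel}); once this is noted, the identity maps serve as $f$ and $g$ and the proof is complete.

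\QED
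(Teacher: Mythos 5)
Your proof is correct and takes essentially the same approach as the paper: the paper's one-line proof also observes that {\sc{3-Strong Kernel}} is a special case of {\sc{Strong Kernel}}, so hardness follows immediately from the PPAD-completeness of {\sc{3-Strong Kernel}}. Your additional remark about why only hardness (not completeness) is asserted — the Aharoni--Holzman matrices may have exponentially many rows when maximal cliques are unbounded — is a correct and useful clarification that the paper leaves implicit.
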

\begin{proof}
{\sc{3-Strong Kernel}} is a special case of {\sc{Strong Kernel}}. Hence, {\sc{Strong Kernel}} is PPAD-hard. \end{proof}

\section{Fractional Stable Paths Problem}\label{sec:fspp}

Stable Paths Problem (Griffin, Shepherd and Wilfong \cite{gsw}) and Fractional Stable Paths Problem (Haxell and Wilfong \cite{fbgp}) are defined in the context of interdomain routing. Let $G(V,E)$ be a simple graph, where $V$ is a set of $n$ source nodes and a unique destination node $d$, and $E$ is the set of edges. The source nodes attempt to establish a {\em{fractional}} paths to the destination node $d$. For any node $u$, $N(u)$ = \{$w\ |\ (u,w) \in E$\} is the set of neighbors of $u$. A $path$ from $s$ to $t$ is defined as a sequence of nodes $(v_1,v_2, \dots v_{k-1},v_k)$, where $v_1 = s$ and $v_k = t$ and $(v_i,v_{i+1}) \in E$ for $1 \leqslant i \leqslant k-1$. We assume that all paths are simple i.e., they do not have repeated nodes. An empty path has no edges and is denoted by $\phi$. Let $|P|$ denote the length of $P$, i.e., number of edges in $P$. If $P$ and $Q$ are non-empty paths such that the first node in $Q$ is same as the last node in $P$, then $PQ$ denotes the path formed by concatenating these paths. We say that path $R$ {\em{ends}} with path $Q$ if $R$ can be written as $PQ$. We say that $Q$ is a final segment of $R$. If $Q$ is non-empty, we say $Q$ is a {\em{proper}} final segment of $R$. \\

\noindent {\em{Preferred Paths}} : Each source node $v$ has a set of preferred paths denote by $\mathcal{P}^v$. Let $|\mathcal{P}^v|$ denote the number of permitted paths at node $v$. We assume that $\forall v \in V$, $\phi \in \mathcal{P}^v$ and we do not count $\phi$ in $|\mathcal{P}^v|$. If $P = (v,v_1,v_2,{\dots}v_k,d)$ is in $\mathcal{P}^v$, then the node $v_1$ is called the {\em{next-hop}} of path $P$. We assume that $\mathcal{P}^d = \emptyset$. \\

\noindent {\em{Ranking Function}} : Each node has an ordered list of its preferred paths. For $P,P'\in\mathcal{P}^v$, we denote $P' \leqslant_v P$  to mean that $v$ prefers $P'$ to $P$. For each $P \in \mathcal{P}^u$ let $\lambda^v(P)$ be $k$ if there are $k-1$ paths $P' \in \mathcal{P}^v$ such that $\lambda^v(P') < \lambda^v(P)$. We assume that ${\lambda}^v(\phi) = 0$. \\

\noindent {\em{FSPP Instance}} : Let $\mathcal{P} = \{\mathcal{P}^v\ |\ v \in V - \{d\}\}$. Let $\Lambda = \{{\lambda}^v\ |\ v \in V - \{d\}\}$. An instance of the Stable Paths Problem, $\mathcal{I} = {\langle}G, \mathcal{P}, {\Lambda}{\rangle}$, is a graph together with the permitted paths and the ranking functions at each node. \\

\noindent {\em{Feasible Solution}} : A feasible solution is defined as an assignment of a non-negative weight $w(P)$ to each path $P \in \mathcal{P}^v$, for every $v$ so that the weights satisfy the two properties listed below. For a non-empty path $S$, let $\displaystyle\mathcal{P}^{v}_{S}$ denote the set of paths in $\mathcal{P}^v$ that end with the path $S$.
\begin{itemize}
\item{{\em{Unity condition}} : For each node $v$, $\displaystyle\sum_{P \in \mathcal{P}^v} w(P) \leqslant 1$.}
\item{{\em{Tree condition}} : For each node $v$, and each non-empty path $S$, $\displaystyle\sum_{P \in \displaystyle\mathcal{P}^{v}_{S}}w(P)\leqslant w(S)$.}
\end{itemize}
\vspace{0.15in}
\noindent {\em{Fractional Stable Solution}} : A fractional stable solution is a feasible solution such that for any path $Q \in \mathcal{P}^v$, one of the two following conditions holds:

\begin{itemize}
\item{$\displaystyle\sum_{P \in \mathcal{P}^v} w(P) = 1$, and each $P \in \mathcal{P}^v$ with
$w(P) > 0$ is such that ${\lambda}^v(P) \geqslant {\lambda}^v(Q)$.
}
\item{there exists a proper final segment $S$ of $Q$, such that $\displaystyle\sum_{P \in \displaystyle\mathcal{P}^{v}_{S}}w(P) = w(S)$, and moreover each $P \in \displaystyle\mathcal{P}^{v}_{S}$ with $w(P) > 0$ is such that ${\lambda}^v(P) \geqslant {\lambda}^v(Q)$.
}
\end{itemize}

\begin{framed}
\noindent {\sc{Fspp}} : Given an instance $\mathcal{I} = {\langle}G, \mathcal{P}, {\Lambda}{\rangle}$ of FSPP, find a fractional stable solution.
\end{framed}

Haxell and Wilfong \cite{fbgp} proved that every instance of FSPP has a fractional stable solution. Their proof works in two stages. In the first stage, they use Scarf's lemma to show that every instance of FSPP has an $\epsilon$-solution, for any positive constant $\epsilon$. Then they apply a standard compactness-type argument to conclude that every instance has an exact solution. 
The following theorem is based on a reduction of Haxell and Wilfong \cite{fbgp-journal}.

\begin{theorem}
{\sc{Fspp}} is PPAD-hard.
\end{theorem}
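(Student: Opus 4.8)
The plan is to prove {\sc{3-Kernel Nash}} $\leq_P$ {\sc{Fspp}}; since {\sc{3-Kernel Nash}} is PPAD-complete, this yields the PPAD-hardness of {\sc{Fspp}}. Concretely, from a clique-acyclic digraph $D(V,A)$ in which every maximal clique has size at most $3$ and all proper cycles are homogeneous and node-disjoint, I would build in polynomial time an FSPP instance $\mathcal{I} = \langle G, \mathcal{P}, \Lambda \rangle$ together with a polynomial-time decoding map that sends every fractional stable solution $w$ of $\mathcal{I}$ to a fractional kernel $f$ of $D$ that is a Nash equilibrium. This is essentially the reduction of Haxell and Wilfong \cite{fbgp-journal}, adapted so that the decoded object is guaranteed to satisfy the extra minimality (Nash) condition required by {\sc{3-Kernel Nash}}.

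For the construction I would let $G$ contain the destination $d$, a source node for every $v \in V$ (also written $v$), and a constant-size gadget for every maximal clique $K$ of $D$ (there are only polynomially many since $|K| \le 3$). The permitted paths of a source $v$ come in two kinds: for each clique $K \subseteq I(v)$, a path $P_{v,K}$ of length $O(|K|) = O(1)$ running from $v$ through the gadget of $K$ and on to $d$, built so that its proper final segments are exactly the paths recording which individual members of $K$ are used; and one \emph{private} path $\pi_v$ from $v$ straight to $d$. The ranking $\lambda^v$ orders the clique paths by a fixed global priority on cliques and places $\pi_v$ strictly last. As with the $\mathcal{G}_{\frac{1}{2}}$-style devices in the proof of Theorem \ref{thm:main}, I would attach zero-indegree helper nodes so that the unit of demand at every source is exactly $1$. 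Homogeneity and node-disjointness of the proper cycles of $D$ should be precisely what lets these local gadgets be glued into one simple graph $G$ without creating spurious long permitted paths, or cyclic preference patterns not already present in $D$.

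The decoding: given a fractional stable solution $w$, set $f(v) := w(\pi_v)$, the weight $v$ is forced onto its worst option. The Unity and Tree conditions translate the capacities built into the clique gadgets into the inequalities $\sum_{u \in K} f(u) \le 1$ for every clique $K$, that is, fractional independence, while the Tree condition on the final segments of $P_{v,K}$ ties the flow $v$ can push through $K$ to the values $f(u)$, $u \in K$. The stability requirement — at every node either the whole budget is used on top-ranked paths, or some proper final segment is tight and carries only top-ranked weight — together with $\pi_v$ being worst-ranked forces $v$ never to resort to $\pi_v$ unless every clique $K \subseteq I(v)$ is already saturated to $\sum_{u \in K} f(u) \ge 1$; this gives strong domination, and the ``tight final segment'' clause gives exactly that a player with $f(v) > 0$ cannot lower its own cost without breaking one of its own kernel constraints, i.e.\ the Nash property. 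Conversely, a fractional kernel of $D$ that is a Nash equilibrium is mapped back by routing each $v$ through a tight dominating clique and sending the remainder along $\pi_v$. Checking that $G,\mathcal{P},\Lambda$ has polynomial size, that all declared paths are simple, and that the ranking functions are mutually consistent is routine once $|K| \le 3$.

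I expect the main obstacle to be designing the clique gadgets so that a single structure forces both directions simultaneously — the ``$\le 1$'' inequality (fractional independence, from Unity/Tree) and the ``$\ge 1$'' inequality (strong domination, from stability) — for cliques of size $2$ and $3$ at once, while keeping every permitted path simple and every $\lambda^v$ globally consistent; and then verifying that the stability behaviour at the private paths captures the full unilateral-cost-minimization condition of {\sc{3-Kernel Nash}} rather than only the weaker strong-kernel condition. A careful accounting of how the homogeneous, node-disjoint proper cycles of $D$ correspond to the ``dispute'' structure of $G$ — paralleling the contraction of those cycles in the {\bf{Compute Nash}} procedure of Lemma \ref{lem:threegame} — should be what makes this manageable.
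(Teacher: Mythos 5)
Your high-level strategy matches the paper's exactly — both reduce {\sc{3-Kernel Nash}} $\leq_P$ {\sc{Fspp}} and decode a fractional stable solution by reading $f(v)$ off the weight assigned to $v$'s least-preferred, direct-to-$d$ path. Where you diverge is in the construction itself, and the divergence is substantial. You build a constant-size \emph{clique gadget} for every maximal clique $K$ of $D$, attach helper nodes to normalize demand, and give $v$ one permitted path $P_{v,K}$ per clique $K \subseteq I(v)$. The paper does something far leaner: it takes $G = (V \cup \{d\}, E)$ with $\{u,v\} \in E$ whenever $(u,v)$ or $(v,u) \in A$ and $\{v,d\} \in E$ for every $v$, and declares the permitted paths of $v$ to be just the length-one path $vd$ (ranked worst) and the length-two paths $vud$ for each out-arc $(v,u) \in A$ (ranked arbitrarily among themselves). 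No gadgets, no helper nodes, no per-clique structure. The clique inequalities you worry about enforcing are recovered indirectly: the Tree condition applied to the final segment $ud$ ties $w(vud) \leq w(ud) = f(u)$, and the Unity and Stability conditions at each node then do the rest.

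The practical consequence is that the main obstacle you flag — designing a single gadget that simultaneously forces $\sum_{u \in K} f(u) \leq 1$ and $\geq 1$ for cliques of size $2$ and $3$ while keeping paths simple and rankings globally consistent — is a problem the paper never has to solve, because it never introduces cliques as explicit objects in $G$. Your gadget route is not obviously wrong, but it is unspecified precisely where it is hardest, and it forgoes the observation that arc-level paths of length two are already enough to couple $v$'s weights to its in-neighbors' values via the Tree condition. If you pursue your version, you would need to actually exhibit the clique gadgets and verify simplicity and consistency of $\lambda^v$; if you adopt the paper's version, you would instead need to verify that the per-arc Tree constraints really aggregate into the clique inequalities of a fractional kernel and that the Stability condition at $vd$ yields the Nash (unilateral minimality) property — a verification the paper itself also leaves implicit, citing \cite{fbgp-journal} for the underlying reduction.
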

\begin{proof}
We show that {\sc{3-Kernel Nash}} $\leq_P$ {\sc{Fspp}}. Let $D=(V,A)$ be a digraph. We construct an instance of FSPP $\mathcal{I} = {\langle}G, \mathcal{P}, {\Lambda}{\rangle}$. Let $G = (V \cup \{d\},E)$. We add an edge $(u,v)$ in $E$ if there is an arc $(u,v)$ or $(v,u)$ in $A$. For each $v \in V$ we add an edge $(v,d)$ in $E$. The preferred paths of $v$ consist of the path $vd$ and $vud$ such that $(v,u) \in A$. The path $vd$ is preferred the least. The preference among the rest of the paths (of the form $vud$) is arbitrary. Let $w$ be any fractional stable solution of $\mathcal{I}$. The corresponding solution to {\sc{3-Kernel Nash}} is obtained by setting $f(v) = w(vd)$ for each $v \in V$.
\end{proof}

\section{Related Problems in PPAD}\label{sec:related}

\subsection{Hypergraphic Preference Systems}

A hypergraphic preference system is a pair $(H, \mathcal{O})$, where $H = (V,E)$ is a hypergraph, and $\mathcal{O} = \{\leq_v\ :\ v \in V\}$ is a family of linear orders, $\leq_v$ being an order on the set $D(v)$ of edges containing the vertex $v$. A set $M$ of edges is called a stable matching with respect to the preference system if it is a matching and for every edge $e$ there exists a vertex $v \in e$ and an edge $m \in M$ containing $v$ such that $e \leq_v m$. A nonnegative function $w$ on the edges in $H$ is called a fractional matching if $\sum_{v \in h}{w(h)} \leqslant 1$ for every vertex $v$. A fractional matching $w$ is called {\em{stable}} if every edge $e$ contains a vertex $v$ such that $\sum_{v \in h, e \leq_v h}{w(h)} = 1$.

\begin{framed}
\noindent {\sc{Hypergraphic Fractional Stable Matching}} : Given a hypergraphic preference system $(H, \mathcal{O})$ find a fractional stable matching.
\end{framed}

Aharoni and Fleiner \cite{onlemma} proved the following theorem. We refer the reader to \cite{onlemma} for the details of the proof. We observe that their proof implies {\sc{Hypergraphic Fractional Stable Matching}} $\leq_P$ {\sc{Scarf}}.

\begin{theorem} (Aharoni and Fleiner \cite{onlemma})
Every hypergraphic preference system has a fractional stable matching
\end{theorem}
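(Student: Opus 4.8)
The plan is to follow the template that Aharoni--Holzman used for Theorem~\ref{clique-acyclic}: encode the preference system $(H,\mathcal{O})$ as a pair of Scarf matrices $B,C$ and a vector $b$, invoke Scarf's lemma (Theorem~\ref{thm:lem}), and read a fractional stable matching off the output. Index the rows of $B$ and $C$ by the vertices $v \in V$, so $m = |V|$. Give $B$ a block structure: the first $m$ columns form the identity $I$ (one ``dummy'' column per vertex), and the remaining columns, indexed by the edges $e \in E$, carry the incidence vector of $e$ (a $1$ in row $v$ exactly when $v \in e$). Set $b = \mathbf{1}$. Each row equation of $B\alpha = b$ then reads $\alpha_{v} + \sum_{e \ni v}\alpha_{e} = 1$, where $\alpha_v$ denotes the dummy coordinate of $v$; hence $\{\alpha \geqslant 0 : B\alpha = b\} \subseteq [0,1]^n$ is bounded, as the lemma requires.

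For $C$ I would use three ``levels'' in each row $v$: put $c_{vv} = 0$; for an edge $e$ with $v \in e$ let $c_{v,e}$ be the rank $\lambda_v(e) \in \{1,\dots,|D(v)|\}$ of $e$ in the order $\leqslant_v$ (larger rank $=$ more preferred); for an edge $e$ with $v \notin e$ set $c_{v,e} = |E|+1$; and for a dummy column $w \neq v$ set $c_{v,w} = |E|+2$. Since $0 \leqslant \lambda_v(\cdot) \leqslant |D(v)| \leqslant |E| < |E|+1 \leqslant |E|+2$, row $v$ is monotone exactly as Theorem~\ref{thm:lem} demands: $c_{vv}\leqslant c_{v,k}\leqslant c_{v,j}$ for $j \leqslant m$, $j \neq v$, $k > m$. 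Applying Scarf's lemma yields a size-$m$ set $J$ and $\alpha \geqslant 0$ supported on $J$ with $B\alpha = b$; define $w(e) = \alpha_e$ when the column of $e$ lies in $J$ and $w(e)=0$ otherwise. The row equations give $\sum_{e \ni v} w(e) = 1 - \alpha_v \leqslant 1$, so $w$ is a fractional matching.

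The heart of the argument — and the step I expect to be the main obstacle — is converting the subordination conclusion into the stability condition. Fix an edge $e$; its column is $J$-subordinated at some row $v$, i.e.\ $c_{v,e} \leqslant c_{v,j}$ for all $j \in J$. First rule out $v \notin e$: since $J$ is a feasible basis, the row-$v$ equation forces $J$ to contain either the dummy column of $v$ (value $c_{vv}=0$) or some edge through $v$ (value $\leqslant |D(v)| \leqslant |E|$); in either case some $j \in J$ has $c_{v,j} \leqslant |E| < |E|+1 = c_{v,e}$, contradicting subordination. So $v \in e$, and then $c_{v,e} = \lambda_v(e) \geqslant 1 > 0 = c_{vv}$ forces the dummy column of $v$ to be absent from $J$, whence $\alpha_v = 0$ and $\sum_{h \ni v} w(h) = 1$. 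Finally, for any edge $h \in J$ with $v \in h$ we have $c_{v,h} = \lambda_v(h) \geqslant c_{v,e} = \lambda_v(e)$, i.e.\ $e \leqslant_v h$; since every $h$ with $w(h)>0$ has its column in $J$, this gives $\sum_{h \ni v,\ e \leqslant_v h} w(h) = \sum_{h \ni v} w(h) = 1$, the stability condition for $e$. As $e$ was arbitrary, $w$ is a fractional stable matching.

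Finally, $B$ and $C$ have $|V|$ rows and $|V|+|E|$ columns, so the construction is polynomial in the size of $(H,\mathcal{O})$ and in fact gives {\sc{Hypergraphic Fractional Stable Matching}} $\leqslant_P$ {\sc{Scarf}}; since Scarf's lemma always produces a solution, every hypergraphic preference system has a fractional stable matching. The only genuinely delicate points are the bookkeeping over which column of $J$ meets row $v$, and the choice of the middle level $|E|+1$ for non-incident edges — it must sit strictly above every genuine rank (so an edge is never subordinated at a vertex it misses) yet strictly below the dummy level (so that $C$ stays monotone).
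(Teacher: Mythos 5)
The paper never actually proves this theorem---it cites Aharoni and Fleiner and merely observes that their argument gives a reduction of {\sc{Hypergraphic Fractional Stable Matching}} to {\sc{Scarf}}. Your proposal supplies exactly the reduction the paper has in mind, and it is correct. Rows of $B,C$ indexed by vertices, an identity block for the dummy columns plus edge-incidence columns, and $b=\mathbf{1}$ is the natural Scarf encoding of the fractional matching polytope; the three-level $C$ (diagonal $0$, rank $\lambda_v(e)$ for incident edges, a high sentinel $|E|+1$ for non-incident edges, and the still higher $|E|+2$ off-diagonal) satisfies the monotonicity hypothesis of Theorem~\ref{thm:lem}. The chain of deductions is sound: feasibility forces some $j\in J$ with $b_{v,j}>0$, which rules out subordination at a non-incident row; $c_{v,e}=\lambda_v(e)\geqslant 1>0=c_{vv}$ forces the dummy column of $v$ out of $J$, giving $\alpha_v=0$ and hence saturation $\sum_{h\ni v}w(h)=1$; and subordination at $v$ makes every $h\in J$ through $v$ weakly preferred to $e$, so the saturating mass lies entirely on edges $h$ with $e\leqslant_v h$. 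The technical caveats (ordinal-genericity, non-degeneracy) are handled by the same perturbation the paper already invokes in its treatment of {\sc{Scarf}}. This is essentially the Aharoni--Fleiner argument; nothing further is needed.
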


\begin{corollary}
{\sc{Hypergraphic Fractional Stable Matching}} $\in$ PPAD.
\end{corollary}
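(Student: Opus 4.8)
The plan is to chain the reduction announced in the observation just before the theorem with the membership result already established in Theorem~\ref{thm:scarf}. Concretely, I would prove {\sc{Hypergraphic Fractional Stable Matching}} $\leq_P$ {\sc{Scarf}}, and then compose this with {\sc{Scarf}} $\leq_P$ {\sc{End Of The Line}} (Theorem~\ref{thm:scarf}). Since PPAD is exactly the class of search problems polynomial-time reducible to {\sc{End Of The Line}} and polynomial-time reductions compose, this places {\sc{Hypergraphic Fractional Stable Matching}} in PPAD.

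The substance is in making the reduction to {\sc{Scarf}} explicit by unpacking the Aharoni--Fleiner construction \cite{onlemma}. Given a hypergraphic preference system $(H,\mathcal{O})$ with $H=(V,E)$, I would build an $m\times n$ matrix $B$ with $m=|V|$ and $n=|V|+|E|$: the first $|V|$ columns form the identity $I_{|V|}$ (one ``slack'' column per vertex, so the hypothesis $b_{ij}=\delta_{ij}$ for $i,j\leqslant m$ holds), and the column for an edge $h\in E$ is its incidence vector in $\{0,1\}^{V}$. Take $b$ to be the all-ones vector in $\mathbb{R}^{m}$; then $\{\alpha\in\mathbb{R}_{+}^{n}:B\alpha=b\}$ is bounded since every coordinate is at most $1$. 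Build $C$ from the linear orders $\leq_v$: in row $v$ put a strictly $\leq_v$-preference-decreasing sequence of reals on the columns of edges containing $v$, a large value on columns of edges not containing $v$, and the smallest value of the row on the diagonal slack entry, so that $c_{ii}\leqslant c_{ik}\leqslant c_{ij}$ holds for $i,j\leqslant m$, $i\neq j$, $k>m$ (perturbing $C$ to be ordinal-generic exactly as in the proof of Theorem~\ref{thm:scarf} if necessary). All of $B$, $C$, $b$ are computable in time polynomial in $|(H,\mathcal{O})|$ because $E$ is given explicitly in the input, so $n$ is polynomial in the input size. Scarf's lemma then yields a set $J$ of size $m$ that is both a feasible basis and subordinating; setting $w(h)=\alpha_h$ for the edge-columns $h\in J$ gives a nonnegative fractional matching, and the $J$-subordination property unwinds precisely into the stability condition: for every edge $e$ there is a vertex $v\in e$ at which $e$ is $J$-subordinated, which says $\sum_{v\in h,\,e\leq_v h} w(h)=1$. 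Since the back-translation $\alpha\mapsto w$ is polynomial-time, this is a valid polynomial-time reduction.

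The hard part will be the bookkeeping in the step above: confirming that the entries of $C$ can be assigned so that Scarf's ordinal hypothesis holds globally, and — more delicately — that ``$J$ subordinating'' is logically equivalent to ``$w$ stable'' once the slack columns and zero-weight edges are discarded. This is essentially a re-derivation of the Aharoni--Fleiner existence argument with attention paid to instance sizes and to the \emph{direction} of the reduction, so no genuinely new idea is required; the work is checking that their construction never blows up super-polynomially (it does not, as $|E|$ is part of the explicit input) and that the functions $f$ (mapping $(H,\mathcal{O})$ to a {\sc{Scarf}} instance) and $g$ (mapping a {\sc{Scarf}} solution back to a stable fractional matching) are both polynomial-time computable. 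Finally, composing $f$ and $g$ with the reduction of Theorem~\ref{thm:scarf} gives {\sc{Hypergraphic Fractional Stable Matching}} $\leq_P$ {\sc{End Of The Line}}, i.e.\ membership in PPAD.
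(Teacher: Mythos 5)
Your proposal follows the same route as the paper: observe that the Aharoni--Fleiner argument is in fact a polynomial-time reduction {\sc{Hypergraphic Fractional Stable Matching}} $\leq_P$ {\sc{Scarf}}, and compose with Theorem~\ref{thm:scarf} to land in PPAD. The paper leaves the reduction to the cited reference; you spell out the $(B,C,b)$ construction and the unwinding of $J$-subordination into the stability condition, which is a correct elaboration of the same idea (the one small bookkeeping gap is that you never explicitly assign the off-diagonal slack entries $c_{ij}$, $i\neq j\leqslant m$, which must be set to the row maximum to satisfy $c_{ik}\leqslant c_{ij}$; this is the detail that forces the subordinating vertex $v$ to lie in $e$).
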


\subsection{Approximate FSPP}

There are two notions of approximation for FSPP : $\epsilon$-solution \cite{fbgp} and $\epsilon$-stable solution \cite{kintali-fspp}. \\

\noindent {\bf{$\epsilon$-solution}} : An $\epsilon$-solution is defined as an assignment of a non-negative weight $w(P)$ to each path $P \in \mathcal{P}^v$, for every $v$ so that the weights satisfy the properties listed below. For a non-empty path $S$, let $\displaystyle\mathcal{P}^{v}_{S}$ denote the set of paths in $\mathcal{P}^v$ that end with the path $S$.
\begin{itemize}
\item{{\em{Unity condition}} : For each node $v$, $\displaystyle\sum_{P \in \mathcal{P}^v} w(P) \leqslant 1$.}
\item{{\em{$\epsilon$-Tree condition}} : For each node $v$, and each non-empty path $S$, $\displaystyle\sum_{P \in \displaystyle\mathcal{P}^{v}_{S}}w(P)\leqslant w(S)+\epsilon$.}

\item {{\em{Stability condition}} : For any path $Q \in \mathcal{P}^v$, one of the two following conditions holds:

\begin{itemize}
\item{$\displaystyle\sum_{P \in \mathcal{P}^v} w(P) = 1$, and each $P \in \mathcal{P}^v$ with
$w(P) > 0$ is such that ${\lambda}^v(P) \geqslant {\lambda}^v(Q)$.
}
\item{there exists a proper final segment $S$ of $Q$, such that $\displaystyle\sum_{P \in \displaystyle\mathcal{P}^{v}_{S}}w(P) = w(S)+\epsilon$, and moreover each $P \in \displaystyle\mathcal{P}^{v}_{S}$ with $w(P) > 0$ is such that ${\lambda}^v(P) \geqslant {\lambda}^v(Q)$.
}
\end{itemize}}
\end{itemize}
\vspace{0.15in}

\begin{framed}
\noindent {\sc{$\epsilon$-solution of FSPP}} : Given and instance of FSPP find an $\epsilon$-solution.
\end{framed}

Using Scarf's lemma, Haxell and Wilfong \cite{fbgp} proved that every instance of FSPP has an $\epsilon$-solution. Their proof is a polynomial reduction from {\sc{$\epsilon$-solution of FSPP}} to {\sc{Scarf}}. For more details we refer the reader to \cite{fbgp}.

\begin{theorem} (Haxell and Wilfong \cite{fbgp})
Every instance of FSPP has an $\epsilon$-solution.
\end{theorem}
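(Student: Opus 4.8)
The plan is to prove the theorem by giving a polynomial-time reduction from {\sc{$\epsilon$-solution of FSPP}} to {\sc{Scarf}}; since Scarf's lemma (Theorem~\ref{thm:lem}) guarantees a solution for every admissible triple $(B,C,b)$, totality of {\sc{$\epsilon$-solution of FSPP}}, i.e., the existence of an $\epsilon$-solution for every instance, follows immediately. Fix $\epsilon>0$ and an instance $\mathcal{I}=\langle G,\mathcal{P},\Lambda\rangle$. After a routine preprocessing step that adds a path $S$ to $\mathcal{P}^{u}$ ($u$ the first node of $S$) whenever some $P\in\mathcal{P}^{v}$ ends with a proper final segment $S$ --- this enlarges the instance only polynomially and does not affect the $\epsilon$-solutions on the original paths --- I would introduce one nonnegative variable $w(P)$ for every permitted path $P$ and build matrices $B$, $C$ and a vector $b$ meeting the hypotheses of Theorem~\ref{thm:lem}.

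I would index the rows of $B$ and $C$ by the constraints defining an $\epsilon$-solution: a \emph{unity row} for each node $v$ and a \emph{tree row} for each pair $(v,S)$ with $S$ a nonempty proper final segment of a path in $\mathcal{P}^{v}$. The first $m$ columns of $B$ form the identity block, one slack per constraint, and the remaining columns carry the variables $w(P)$. The unity row of $v$ is $\sum_{P\in\mathcal{P}^{v}}w(P)+s_v=1$ and the tree row of $(v,S)$ is $\sum_{P\in\mathcal{P}^{v}_{S}}w(P)-w(S)+t_{v,S}=\epsilon$; the coefficient $-1$ on $w(S)$ is the only negative entry, which is allowed since $B$ is a real matrix. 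Then $b$ has entries $1$ on unity rows and $\epsilon>0$ on tree rows, so $b\geqslant 0$; since $w(P)\leqslant 1$ for every path by its own unity row, every slack is bounded too, so $\{\alpha\in{\mathbb{R}}_{+}^{n}:B\alpha=b\}$ is bounded, and $m<n$. For $C$, in each row I would put the diagonal slack entry $c_{ii}$ smallest (say $0$), the other slack entries $c_{ij}$ largest (say $2$), and the path entries strictly between, so $c_{ii}\leqslant c_{ik}\leqslant c_{ij}$ holds; among the path entries of a unity row of $v$ I would make $c_{i,P}$ strictly increasing in $\lambda^{v}(P)$ for $P\in\mathcal{P}^{v}$ and give every other path the largest (sub-slack) value, and likewise in a tree row of $(v,S)$ with $\mathcal{P}^{v}_{S}$ in place of $\mathcal{P}^{v}$. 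Finally I would apply the genericity/perturbation step used in the proof of Theorem~\ref{thm:scarf} to make $(B,b)$ non-degenerate and $C$ ordinal-generic without changing feasible bases or subordinating sets. All of this is polynomial-time computable because $\epsilon$ is fixed and each $|\mathcal{P}^{v}|$ is part of the input.

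Given a solution of {\sc{Scarf}}, i.e., $\alpha\geqslant 0$ with $B\alpha=b$ supported on a size-$m$ feasible basis $J$ that is subordinating, I would set $w(P)=\alpha_P$. The unity and $\epsilon$-tree conditions are exactly the rows of $B\alpha=b$ with the slacks nonnegative, so only stability remains. Fix $Q\in\mathcal{P}^{v}$ and let $i$ be a row at which the column of $Q$ is $J$-subordinated. If the slack of row $i$ were in $J$ we would get $c_{i,Q}\leqslant c_{ii}$, contradicting $c_{ii}<c_{i,Q}$ by ordinal-genericity; hence that slack is $0$ and the constraint of row $i$ is tight. The crux is the \emph{relevance lemma}: $i$ must be the unity row of $v$ or a tree row $(v,S)$ with $S$ a proper final segment of $Q$. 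Granting this, monotonicity of $c_{i,\cdot}$ in $\lambda^{v}$ shows that any path $P$ in the family of row $i$ ($\mathcal{P}^{v}$ or $\mathcal{P}^{v}_{S}$) with $\lambda^{v}(P)<\lambda^{v}(Q)$ has $c_{i,P}<c_{i,Q}$, hence lies outside $J$ and has $w(P)=0$; together with the tightness of row $i$ this is precisely one of the two alternatives in the definition of an $\epsilon$-solution. Doing this for every $Q$ yields an $\epsilon$-solution, and the map $\alpha\mapsto w$ is polynomial-time, completing the reduction.

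I expect the relevance lemma to be the main obstacle. One must rule out the column of $Q\in\mathcal{P}^{v}$ being $J$-subordinated at a unity or tree row of some $u\neq v$, at a tree row $(v,S)$ with $S$ not a final segment of $Q$, and --- most delicately --- at a tree row $(u,Q)$, where $w(Q)$ occurs with coefficient $-1$. The tool is the interaction between the ``largest sub-slack value'' placement of $Q$ in those rows, ordinal-genericity (only $Q$ attains that value in its row), and the fact that $J$ is also a feasible basis, which tightly restricts which general columns can lie in $J$ and forces a contradiction with $B\alpha=b$. Secondary points to pin down are the preprocessing and the bookkeeping that keeps $m$ polynomial and $b$ nonnegative once the tree rows with their $-1$ entries are present, and a check that the perturbation arguments already invoked for Theorem~\ref{thm:scarf} transfer verbatim to the matrices built here.
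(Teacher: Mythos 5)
The paper does not actually prove this statement: the ``proof'' is a citation to Haxell and Wilfong together with the one-line remark that their argument is a polynomial-time reduction from {\sc{$\epsilon$-solution of FSPP}} to {\sc{Scarf}}. Your proposal takes exactly that route --- rows of $B,C$ indexed by the unity and $\epsilon$-tree constraints, an identity slack block, and a $C$ that encodes the ranking $\lambda^v$ within each constraint family --- so at the level at which the paper commits itself, you are following the same approach, just supplying the missing construction.

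On the point you flag as the main obstacle, the ``relevance lemma'': it does hold with the construction you describe, and the argument is the one your last paragraph sketches. Suppose the column of $Q\in\mathcal{P}^v$ is $J$-subordinated at a row $i$ that is not the unity row of $v$ nor a tree row $(v,S)$ with $S$ a proper final segment of $Q$. Then in your $C$-assignment $Q$ lies in the ``other path'' band of row $i$, strictly above every entry assigned to row $i$'s own constraint family. Subordination at $i$ forces every $j\in J$ with $j>m$, $j\neq Q$, to satisfy $c_{ij}\geq c_{iQ}$, hence to lie in the same band, hence to have nonpositive $B$-coefficient in row $i$ (either $0$, or $-1$ in the $w(S)$ slot of a tree row); and it excludes the slack of row $i$ from $J$, since $c_{ii}$ is minimal and, after the ordinal-genericity perturbation, strictly so. Summing row $i$ of $B\alpha=b$ over the support $J$ then gives a left-hand side that is $\leq 0$, while $b_i\in\{1,\epsilon\}$ is strictly positive --- a contradiction. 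This one computation disposes of all three cases you list, including the delicate one where $Q$ is the $-1$-coefficient column of a tree row $(u,Q)$: there the left-hand side is exactly $-\alpha_Q\leq 0 < \epsilon$. So the tight row at which $Q$ is subordinated must belong to $v$, and $Q$ itself must be in that row's family, which is precisely what the stability condition of an $\epsilon$-solution asks for. The only thing I would tighten is the preprocessing remark: whether adding missing suffixes $S$ to $\mathcal{P}^u$ is genuinely harmless depends on whether FSPP instances are assumed suffix-closed (as in Haxell--Wilfong's setting). If they are, no preprocessing is needed; if not, it is cleaner to treat $w(S)=0$ for unpermitted $S$, drop the $-1$ column from that tree row, and avoid changing the instance at all.
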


\begin{corollary}
{\sc{$\epsilon$-solution of FSPP}} $\in$ PPAD.
\end{corollary}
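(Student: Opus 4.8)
The plan is to place {\sc{$\epsilon$-solution of FSPP}} in PPAD the same way the earlier corollaries placed {\sc{Scarf}} and {\sc{Hypergraphic Fractional Stable Matching}} there: exhibit a polynomial-time reduction to a problem already shown to lie in PPAD, and invoke transitivity. The natural target is {\sc{Scarf}}, for which we have already established {\sc{Scarf}} $\leq_P$ {\sc{End Of The Line}} (Theorem \ref{thm:scarf}) and hence {\sc{Scarf}} $\in$ PPAD. First I would note that {\sc{$\epsilon$-solution of FSPP}} is a genuine total search problem: by the Haxell--Wilfong theorem stated just above, every FSPP instance admits an $\epsilon$-solution for every rational $\epsilon>0$ given in the input, and a candidate weight assignment $w$ can be checked against the unity condition, the $\epsilon$-tree condition, and the stability condition in time polynomial in the number of permitted paths and the bit-length of $\epsilon$. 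So the problem is in TFNP, and the only remaining task is to pin down its syntactic class.

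Next I would make explicit the reduction implicit in \cite{fbgp}. Their existence proof builds, from an FSPP instance $\mathcal{I}=\langle G,\mathcal{P},\Lambda\rangle$ together with $\epsilon$, matrices $B$ and $C$ and a vector $b$ satisfying the hypotheses of Scarf's lemma (Theorem \ref{thm:lem}), in such a way that a vector $\alpha$ returned by {\sc{Scarf}} can be read off as a weight assignment $w$ satisfying all three $\epsilon$-solution conditions. I would package this as the pair of polynomial maps $(f,g)$ demanded by the definition of a polynomial-time reduction in the PPAD subsection: $f$ sends $(\mathcal{I},\epsilon)$ to the {\sc{Scarf}} instance $(B,C,b)$, and $g$ sends a {\sc{Scarf}} solution $\alpha$ back to an $\epsilon$-solution $w$ of $\mathcal{I}$. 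Composing $f,g$ with the reduction of Theorem \ref{thm:scarf} then yields {\sc{$\epsilon$-solution of FSPP}} $\leq_P$ {\sc{End Of The Line}}, i.e. membership in PPAD.

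The one step that actually needs checking — and the only place anything could go wrong — is verifying that $f$ and $g$ run in polynomial time, equivalently that the {\sc{Scarf}} instance has polynomial size. This is the analogue of the observation in Lemma \ref{lem:kernelppad} that bounding clique sizes keeps the number of Scarf constraints polynomial. Here the rows of $B$ and $C$ are indexed by the nodes of $G$ and the non-empty final segments occurring among the permitted paths, and the columns by the permitted paths (plus the $m$ unit columns); since the permitted-path lists are part of the input, both dimensions are polynomial in $|\mathcal{I}|$, and the $\epsilon$-slack only alters entries of $b$, not its dimension, so entries have polynomial bit-length in the encoding of $\epsilon$. I would confirm against \cite{fbgp} that no step of their construction introduces exponentially many constraints and that $g$ is an explicit, efficiently computable read-off. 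There is no deeper obstacle: the result follows routinely once Haxell and Wilfong's reduction is confirmed to be honestly polynomial-time and {\sc{Scarf}}'s PPAD-membership, established earlier in the paper, is invoked.
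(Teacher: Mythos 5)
Your proposal takes essentially the same route as the paper: the paper's argument (given in the text immediately preceding the corollary) is exactly that Haxell and Wilfong's existence proof is a polynomial-time reduction from {\sc{$\epsilon$-solution of FSPP}} to {\sc{Scarf}}, which composed with Theorem \ref{thm:scarf} gives PPAD-membership. You simply spell out the $(f,g)$ structure of that reduction and the polynomial-size check, which the paper delegates to \cite{fbgp} without elaboration.
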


\noindent {\bf{$\epsilon$-stable Solution}} :
An $\epsilon$-stable solution to FSPP is a feasible solution (i.e., it satisfies unity and tree conditions mentioned in Section \ref{sec:fspp}) such that for any path $Q \in \mathcal{P}^v$, one of the two following conditions holds:

\begin{itemize}
\item{$1 - \epsilon \leqslant \displaystyle\sum_{P \in \mathcal{P}^v} w(P) \leqslant 1$, and each $P \in \mathcal{P}^v$ with
$w(P) > 0$ is such that ${\lambda}^v(P) \geqslant {\lambda}^v(Q)$.
}
\item{there exists a proper final segment $S$ of $Q$, such that $w(S) - \epsilon \leqslant \displaystyle\sum_{P \in \displaystyle\mathcal{P}^{v}_{S}}w(P) \leqslant w(S)$, and moreover each $P \in \displaystyle\mathcal{P}^{v}_{S}$ with $w(P) > 0$ is such that ${\lambda}^v(P) \geqslant {\lambda}^v(Q)$.
}
\end{itemize}

Note that, when $\epsilon = 0$, both $\epsilon$-stable solution and $\epsilon$-solution are equivalent to a fractional stable solution. In \cite{kintali-fspp} we defined a game-theoretic model of FSPP and presented a relation between $\epsilon$-Nash and $\epsilon$-stable solution. We also presented a constructive proof (a distributed algorithm) showing that all instances of FSPP have an $\epsilon$-stable solution for any given $\epsilon > 0$. However, the complexity of finding an $\epsilon$-stable solution is an open problem.

\section{Conclusion and Open Problems}\label{sec:open}

In this paper, we studied the complexity of computational version of Scarf's lemma ({\sc{Scarf}}) and related problems. We proved that {\sc{Scarf}} is complete for the complexity class PPAD thus showing that {\sc{Scarf}} is as hard as the computational versions of Brouwer's fixed point theorem and Sperner's lemma. Hence, there is no polynomial-time algorithm for {\sc{Scarf}} unless PPAD $\subseteq$ P. We also showed that fractional stable paths problem and finding strong fractional kernels in digraphs are PPAD-hard. Following are some of the problems left open by our work :

\begin{itemize}
\item{We know that {\sc{Hypergraphic Fractional Stable Matching}} is in PPAD. Is it PPAD-complete ?}
\item{What is the complexity of finding $\epsilon$-solution \cite{fbgp} or $\epsilon$-stable solution \cite{kintali-fspp} of FSPP ?}
\item{Is {\sc{Fspp}} in PPAD ? A positive answer would give an alternate proof of Haxell and Wilfong's theorem.}
\item{What is the complexity of finding a core in a balanced $N$-person game with nontransferable utilities ?}
\end{itemize}

\noindent Please see \cite{kprst-ppad} for a complete treatment of the above open problems.

\vspace{0.30in}

\noindent {\large{\bf{Acknowledgements}}} : I am grateful to H. Venkateswaran for many helpful and motivating discussions throughout the course of this project. I would like to thank Gordon Wilfong for sending preprints of \cite{fbgp-journal} and \cite{fbgp}.

\bibliographystyle{plain}
\bibliography{bib-kintali}

\vspace{0.30in}


\end{document}